\def\@doi#1{\href{https://doi.org/#1}
      {\ttfamily https://doi.org/#1}\egroup}}
\def\@doi#1{\ttfamily https://doi.org/#1\egroup}}
  \def\doi{\bgroup\catcode`\_=12\relax\@doi}}
\renewcommand{\subsubsection}{\@startsection{subsubsection}{3}{\z@}%
                       {-12\p@ \@plus -4\p@ \@minus -4\p@}%
                       {-0.5em \@plus -0.22em \@minus -0.1em}%
                       {\normalfont\normalsize\bfseries\boldmath}}
\newtheorem{definition}{Definition}[section]
\newtheorem*{definition*}{Definition}
\newtheorem{theorem}{Theorem}[section]
\newtheorem{corollary}{Corollary}[theorem]
\newtheorem{lemma}[theorem]{Lemma}
\renewcommand{\geq}{\geqslant}
\renewcommand{\leq}{\leqslant}
\newcommand{\fv}[1]{\ensuremath{\mathsf{fv}({#1})}}
\newcommand{\succs}[1]{\ensuremath{\mathord{\downarrow}{#1}}}
\newcommand{\red}{\mathbin{{\rightarrow}\!{\bullet}}}
\newcommand{\agg}{\mathbin{{\circ}\!{\rightarrow}}}
\newcommand{\maseq}[1]{\underline{#1}}
\newcommand{\comma}{\wedge}
\newcommand{\st}{\;.\;}
\newcommand{\Nat}{\mathbb{N}}
\newcommand{\Int}{\mathbb{Z}}
\newcommand{\eimplies}[1][E]{\mathop{\sqsubseteq_{#1}}}
\newcommand{\eequiv}[1][E]{\mathop{\equiv_{#1}\,}}
\newcommand{\FM}[1][A]{\mathrm{FM}_{{{#1}}}}
\newcommand{\reduc}{\equiv}
\newcommand*{\defeq}{\triangleq}
\def \rightarrowfill{\m@th\mathord{\smash-}\mkern-6mu%
  \cleaders\hbox{$\mkern-2mu\mathord{\smash-}\mkern-2mu$}\hfill
  \mkern-6mu\mathord\rightarrow}
\def \Rightarrowfill{\m@th\mathord{\smash-}\mkern-6mu%
  \cleaders\hbox{$\mkern-2mu\mathord{\smash-}\mkern-2mu$}\hfill
  \mkern-6mu\mathord\Rightarrow}
\def \rightarrowfill{\m@th\mathord{\smash-}\mkern-6mu%
  \cleaders\hbox{$\mkern-2mu\mathord{\smash-}\mkern-2mu$}\hfill
  \mkern-6mu\mathord\rightarrow}
\def \Rightarrowfill{\m@th\mathord{\smash=}\mkern-6mu%
  \cleaders\hbox{$\mkern-2mu\mathord{\smash=}\mkern-2mu$}\hfill
  \mkern-6mu\mathord\Rightarrow}
\def \midrightarrowfill{\m@th\mathord{\smash{\raisebox{.2ex}{$\scriptscriptstyle\mid$}}\!\!\,-}\mkern-6mu%
  \cleaders\hbox{$\mkern-2mu\mathord{\smash-}\mkern-2mu$}\hfill
  \mkern-6mu\mathord\rightarrow}
\def \midRightarrowfill{\m@th\mathord{\smash{\raisebox{.1ex}{$\scriptstyle\mid$}}\!\!\!=}\mkern-6mu%
  \cleaders\hbox{$\mkern-2mu\mathord{\smash=}\mkern-2mu$}\hfill
  \mkern-6mu\mathord\Rightarrow}
\newcommand{\overstackrel}[2]{\mathrel{\mathop{#1}\limits^{#2}}}
\newcommand{\trans}[1]{\mathbin{\smash[t]{\overstackrel{\rightarrowfill}{\ #1\ }}}}
\newcommand{\tfg}[2][{}]{\mathrm{T}_{#1}({#2})}
\newcommand{\tfgA}{A^{\circ}}
\newcommand{\tfgR}{R^{\bullet}}
\newcommand{\C}{\mathit{C}}
\newcommand{\B}{\mathit{B}}
\newcommand{\HLF}[2][{X}]{\mathrm{HLF}_{#1}({#2})}
\newcommand{\tool}[1]{\textsf{#1}}
\newcommand{\reduce}{\tool{Reduce}}
\newcommand{\tina}{\tool{Tina}}
\newcommand{\lola}{\tool{LoLA}}
\newcommand{\its}{\tool{ITS}}
\newcommand{\tapaal}{\tool{TAPAAL}}
\providecommand{\keywords}[1]
{\small\textbf{\textit{Keywords---}} #1}
\title{Project and Conquer: Fast Quantifier Elimination for Checking Petri Net Reachability}
\author[1]{Nicolas Amat}
\author[1]{Silvano {Dal Zilio}} 
\author[1]{Didier {Le Botlan}} 
\affil[1]{LAAS-CNRS, Universit\'{e} de Toulouse, CNRS, Toulouse, France}
\date{}
\begin{document}
\maketitle
\sloppy

\begin{abstract}
  We propose a method for checking generalized reachability properties in Petri
  nets that takes advantage of structural reductions and that can be used,
  transparently, as a pre-processing step of existing model-checkers. Our
  approach is based on a new procedure that can project a property, about
  an initial Petri net, into an equivalent formula that only refers to the
  reduced version of this net.
  Our projection is defined as a variable elimination procedure for linear
  integer arithmetic tailored to the specific kind of constraints we handle.
  It has linear complexity, is guaranteed to return a sound property, and makes
  use of a simple condition to detect when the result is exact. Experimental
  results show that our approach works well in practice and that it can be
  useful even when there is only a limited amount of reductions.
  \vskip 0.3em
  \noindent\keywords{Petri nets; Quantifier elimination; Reachability problems.}
\end{abstract}
%
%

\section{Introduction}

We describe a method to accelerate the verification of reachability properties
in Petri nets by taking advantage of {structural
reductions}~\cite{berthelot_transformations_1987}. We focus on the verification
of generalized properties, that can be expressed using a Boolean combination of
linear constraints between places, such as $(2\, p_0 + p_1 = 5) \land (p_1 \geq
p_2)$ for example. This class of formulas corresponds to the reachability
queries used in the Model Checking Contest (MCC)~\cite{mcc2019}, a competition
of Petri net verification tools that we use as a benchmark.

In essence, net reductions are a class of transformations that can simplify an
initial net, $(N_1, m_1)$, into another, residual net $(N_2, m_2)$, while
preserving a given class of properties. This technique has become a conventional
optimization integrated into several model-checking
tools~\cite{berthomieu2018petri,bonneland_stubborn_2019,thierry2021symbolic}. A
contribution of our paper is a procedure to transform a property $F_1$,
about the net $N_1$, into a property $F_2$ about the reduced net $N_2$, while
preserving the verdict. We have implemented this procedure into a new tool,
called \tool{Octant}~\cite{octant}, that can act as a pre-processor allowing any
model-checker to transparently benefit from our optimization. Something that was
not possible in our previous works. In practice, it means that we can use our
approach as a front-end to accelerate any model-checking tool that supports
generalized reachability properties, without modifying them. 

Our approach relies on a notion, called \emph{polyhedral
reduction}~\cite{pn2021,fi2022,spin2021,berthomieu_counting_2019}, that
describes a linear dependence relation, $E$, between the reachable markings of a
net and those of its reduced version. This equivalence, denoted $(N_1, m_1)
\eequiv (N_2,m_2)$, preserves enough information in $E$ so that we can rebuild
the state space of $N_1$ knowing only the one of $N_2$. An interesting
application of this relation is the following \emph{reachability conservation
theorem}~\cite{pn2021}: assume we have $(N_1, m_1) \eequiv (N_2,m_2)$, then
property $F$ is reachable in $N_1$ if and only if $E \land F$ is reachable in
$N_2$.
This property is interesting since it means that we can apply more aggressive
reduction techniques than, say,
\emph{slicing}~\cite{rakow2012safety,llorens2017integrated,khan2018survey},
\emph{cone of influence}~\cite{clarke2018model}, or other
methods~\cite{ganty2008many,kang2021abstraction} that seek to remove or gather
together places that are not relevant to the property we want to check. We do
not share this restriction in our approach, since we reduce nets beforehand and
can therefore reduce places that occur in the initial property. We could argue
that approaches similar to slicing only simplify a model with respect to a
formula, whereas, with our method, we simplify the model as much as possible and
then simplify formulas as needed. This is more efficient when we need to check
several properties on the same model and, in any case, nothing prevents us from
applying slicing techniques on the result of our projection.

However, there is a complication, arising from the fact that the formula $E
\land F$ may include variables (places) that no longer occur in the reduced net
$N_2$, and therefore act as existentially quantified variables. This can
complicate some symbolic verification techniques, such as
$k$-induction~\cite{sheeran_checking_2000}, and impede the use of explicit,
enumerative approaches. Indeed, in the latter case, it means that we need to
solve an integer linear problem for each new state, instead of just evaluating a
closed formula. To overcome this problem, we propose a new method for projecting
the formula $E \land F$ into an equivalent one, $F'$, that only refers to the
places of $N_2$.
We define our projection as a procedure for quantifier elimination in
Presburger Arithmetic (PA) that is tailored to the specific kind of constraints
we handle in $E$. Whereas quantifier elimination has an exponential
complexity in general for existential formulas, our construction has linear complexity and can only
decrease the size of a formula. It also always terminates and returns a result
that is guaranteed to be sound; meaning it under-approximates the set of
reachable models and, therefore, a witness of $F'$ in $N_2$ necessarily
corresponds to a witness of $F$ in $N_1$.
Additionally, our approach includes a simple condition on $F$ that is enough to
detect when our result is exact, meaning that if $F'$ is unreachable in $N_2$,
then $F$ is unreachable in $N_1$. We show in Sect.~\ref{sec:performance} that
our projection is complete for $80\%$ of the formulas used in the MCC.


\subsubsection*{Outline and Contributions.}
We start by giving some technical background about Petri nets and the notion of
polyhedral abstraction in Sect.~\ref{sec:fundations}, then describe how to use
this equivalence to accelerate the verification of reachability properties
(Th.~\ref{th:exploration} in Sect.~\ref{sec:combining}). We also use this
section to motivate our need to find methods to eliminate (or project) variables
in a linear integer system. We define our fast projection method in
Sect.~\ref{sec:tfg}, which is based on a dedicated graph structure, called Token
Flow Graph (TFG), capturing the particular form of constraints occurring with
polyhedral reductions. We prove the correctness of this method in
Sect.~\ref{sec:procedure}. Our method has been implemented, and we report on the
results of our experiments in Sect.~\ref{sec:performance}. We give quantitative
evidence about several natural questions raised by our approach. We start by
proving the effectiveness of our optimization on both $k$-induction and random
walk. Then, we show that our method can be transparently added as a
preprocessing step to off-the-shelf verification tools. This is achieved by
testing our approach with the three best-performing tools that participated in
the reachability category of the MCC---\tool{ITS-Tools}~\cite{its_tools} (or
\its\ for short); \lola~\cite{lola}; and \tapaal~\cite{tapaal}---which are
already optimized for the type of models and formulas used in our benchmark.
Our results show that reductions are effective on a large set of queries and
that their benefits do not overlap with other existing optimizations, an
observation that was already made in~\cite{fi2022,bonneland_stubborn_2019}. We
also prove that our procedure often computes an exact projection and compares
favorably well with the variable elimination methods implemented in
\tool{isl}~\cite{verdoolaege2010isl} and
\tool{Redlog}~\cite{dolzmann1997redlog}. This supports our claim that we are
able to solve non-trivial quantifier elimination problems.


\section{Petri Nets and Polyhedral Abstraction}
\label{sec:fundations}

Most of our results involve non-negative integer solutions to constraints
expressed in Presburger Arithmetic, the first-order theory of the integers with
addition~\cite{haase2018survival}.
We focus on the quantifier-free fragment of PA, meaning Boolean combinations
(using $\land$, $\lor$ and $\neg$) of atomic propositions of the form $\alpha
\sim \beta$, where $\sim$ is one of $=, \leq$ or $\geq$, and $\alpha, \beta$ are
linear expressions with coefficients in $\Int$. Without loss of generality, we
can consider only formulas in disjunctive normal form (DNF), with \emph{linear
predicates} of the form $(\sum k_i\,x_i) + b \geqslant 0$. We deliberately do
not add a divisibility operator $k \mid \alpha$, which requires that $k$ evenly
divides $\alpha$, since it can already be expressed with linear predicates,
though at the cost of an extra existentially quantified variable.
This fragment corresponds to the set of reachability formulas supported by many
model-checkers for Petri nets, such
as~\cite{30YearsOfGreatSPN,tina2004,tapaal,its_tools,lola}.

We use $\Nat^V$ to denote the space of mappings over $V = \{x_1, \dots, x_n\}$,
meaning total mappings from $V$ to $\Nat$. We say that a mapping $m$ in $\Nat^V$
is a \emph{model} of a quantifier-free formula $F$ if the variables of $F$,
denoted $\fv{F}$, are included in $V$ and the closed formula $F\{m\}$ (the
substitution of $m$ in $F$) is true. We denote this relation $m \models F$.
\begin{equation}
  \label{eq:1}
  F\{m\} \ \defeq\ F\{x_1 \leftarrow m(x_1)\}\dots\{x_n \leftarrow
m(x_n)\}
\end{equation}
We say that a Presburger formula is \emph{consistent} when it has at least one
model. We can use this notion to extend the definition of models in the case
where $F$ is over-specified; i.e. it has a larger support than $m$. If $m' \in
\Nat^{U}$ with $U \subseteq \fv{F}$, we write $m' \models F$ when $F\{m'\}$ is
consistent. 

\subsubsection*{Petri Nets and Reachability Formulas.}
A \textit{Petri net} $N$ is a tuple $(P, T, \mathrm{Pre},\allowbreak
\mathrm{Post})$ where $P = \{p_1, \dots, p_n\}$ is an ordered set of places, $T
= \{t_1, \dots, t_k\}$ is a finite set of transitions (disjoint from $P$), and
$\mathrm{Pre} : T \rightarrow \Nat^P$ and $\mathrm{Post} : T \rightarrow \Nat^P$
are the pre- and post-condition functions (also called the flow functions of
$N$).
A state $m$ of a net, also called a \emph{marking}, is a mapping of $\Nat^P$. A
marked net $(N, m_0)$ is a pair composed of a net and its initial marking $m_0$.

We extend the comparison $(=,\geqslant)$ and arithmetic operations $(-,+)$ to
their point-wise equivalent. With our notations, a transition $t \in T$ is said
\textit{enabled} at marking $m$ when $m \geq \mathrm{Pre}(t)$. A marking $m'$ is
reachable from a marking $m$ by firing transition $t$, denoted $m \trans{t} m'$,
if: (1) transition $t$ is enabled at $m$; and (2) $m' = m - \mathrm{Pre}(t) +
\mathrm{Post}(t)$. When the identity of the transition is unimportant, we simply
write this relation $m \trans{} m'$. More generally, a marking $m'$ is reachable
from $m$ in $N$, denoted $m \trans{}^\star m'$ if there is a (possibly empty)
sequence of transitions such that $m \trans{} \dots \trans{} m'$. We denote
$R(N, m_0)$ the set of markings reachable from $m_0$ in $N$.

We are interested in the verification of properties over the reachable markings
of a marked net $(N, m_0)$, with a set of places $P$. Given a formula $F$ with
variables in $P$, we say that $F$ is reachable if there exists at least one
reachable marking, $m \in R(N, m_0)$, such that $m \models F$. We call such
marking a \emph{witness} of $F$. Likewise, $F$ is said to be an \emph{invariant}
when all the reachable markings of $(N, m_0)$ are models of $F$.
This corresponds to the two classes of queries found in our benchmark:
$\mathrm{EF} \, F$, which is true only if $F$ is reachable; and $\mathrm{AG} \,
F$, which is true when $F$ is an invariant, with the classic relationship that
$\mathrm{AG}\, F \equiv \neg \, (\mathrm{EF}\, \neg F)$. Examples of
properties we can express in this way include: checking if some transition can
possibly be enabled, checking if there is a deadlock, checking whether some
linear invariant between places is always true, etc.

\begin{figure}[tb]
  \centering
  \includegraphics[width=0.9\textwidth]{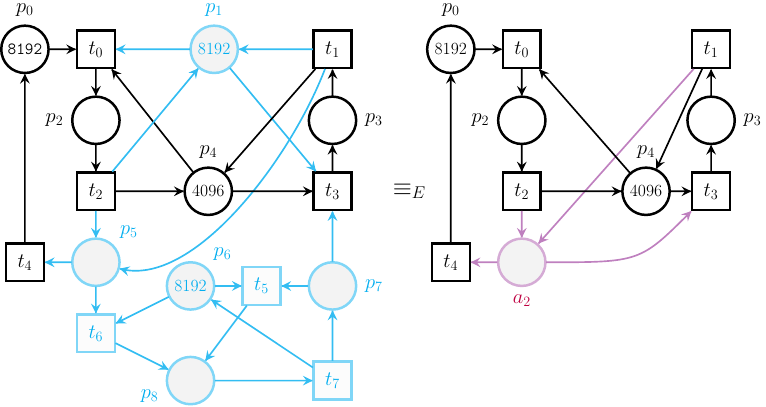}
  \caption{An example of Petri net, $M_1$ (left), and one of its polyhedral
  abstractions, $M_2$ (right), with $E \defeq (p_1 = p_4 + 4096) \comma (p_6 =
  p_0 + p_2 + p_3 + p_5 + p_7) \comma (a_1 = p_7 + p_8) \comma (a_2 = a_1 +
  p_5)$. Colors are used to emphasize places that are either removed or
  added.\label{fig:stahl}}
\end{figure}

We use a standard graphical notation for nets where places are depicted with
circles and transitions with squares. We give an example in
Fig.~\ref{fig:stahl}, where net $M_1$ depicts the SmallOperatingSystem model,
borrowed from the MCC benchmark~\cite{kordon_small_2015}. This net abstracts the
lifecycle of a task in a simplified operating system handling several memory
segments (place $p_0$), disk controller units ($p_4$), and cores ($p_6$). The
initial marking of the net gives the number of resources available (e.g., there
are $8\,192$ available memory segments in our example).

We chose this model since it is one of the few examples in our benchmark that
fits on one page. This is not to say that the example is simple. Net $M_1$ has
about $10^{17}$ reachable states, which means that it is out of reach of
enumerative methods, and only one symbolic tool in the MCC is able to generate
its whole state space\footnote{It is \tool{Tedd}, part of the Tina toolbox,
which also uses polyhedral reductions.}~\cite{mcc:2022}. For comparison, the
reduced net $M_2$ has about $10^{10}$ states.

\subsubsection*{Polyhedral Abstraction.}
We recently defined an equivalence relation that describes linear dependencies
between the markings of two different nets, $N_1$ and $N_2$~\cite{fi2022}. In
the following, we reserve $F$ for formulas about a single net and use $E$ to
refer to relations. Assume $m$ is a mapping of $\Nat^V$. We can associate $m$ to
the linear predicate $\maseq{m}$, which is a formula with a unique model $m$.
\begin{equation}
  \label{eq:2}
  \maseq{m} \ \defeq\ \bigwedge \{ x = m(x) \mid x \in V \}
\end{equation}
By extension, we say that $m$ is a (partial) solution of $E$ if the system $E
\comma \maseq{m}$ is consistent. In some sense, we use $\maseq{m}$ as a
substitution, since the formulas $E\{m\}$ and $E \land \maseq{m}$ have the same
models. Given two mappings $m_1 \in \Nat^{V_1}$ and $m_2 \in \Nat^{V_2}$, we say
that $m_1$ and $m_2$ are \emph{compatible} when they have equal values on their
shared domain: $m_1(x) = m_2(x)$ for all $x$ in $V_1 \cap V_2$. This is a
necessary and sufficient condition for the system $\maseq{m_1} \comma
\maseq{m_2}$ to be consistent. Finally, we say that $m_1$ and $m_2$ are related
up-to $E$, denoted $m_1 \eequiv m_2$, when $E \comma \maseq{m_1} \comma
\maseq{m_2}$ is consistent.
\begin{equation}
  \label{eq:3}
  m_1 \eequiv m_2  \quad \Leftrightarrow \quad \exists m \in \Nat^V
  \st m \models E \comma \maseq{m_1} \comma \maseq{m_2}
\end{equation}
This relation defines an equivalence between markings of two different nets
($\eequiv \subseteq \Nat^{P_1} \times \Nat^{P_2}$) and, by extension, can be
used to define an equivalence between nets themselves, that we call
\emph{polyhedral equivalence}.

\begin{definition}[$E$-equivalence]
  \label{def:eequiv}
  We say that $(N_1, m_1)$ is $E$-equivalent to\linebreak $(N_2, m_2)$, denoted
  $(N_1, m_1) \reduc_E (N_2, m_2)$, if and only if:
  \begin{description}
 \item [(A1)] $E \comma \maseq{m}$ is consistent for all markings $m$     in
 $R(N_1, m_1)$ or $R(N_2, m_2)$;

  \item[(A2)] initial markings are \emph{compatible}: $m_1 \eequiv m_2$;

  \item[(A3)] assume $m'_1, m'_2$ are markings of $N_1,N_2$, such that $m'_1 \eequiv m'_2$, then 
  $m_1'$ is reachable iff $m_2'$ is reachable: $m'_1 \in R(N_1,m_1) \iff m_2' \in R(N_2, m_2)$.
  \end{description}
\end{definition}

By definition, given the equivalence $(N_1, m_1) \reduc_E (N_2, m_2)$, every
marking $m'_2$ reachable in $N_2$ can be associated to a subset of markings in
$N_1$, defined from the solutions to $E\comma \maseq{m'_2}$ (by condition (A1) and
(A3)). In practice, this gives a partition of the reachable markings of $(N_1,
m_1)$ into ``convex sets''---hence the name polyhedral abstraction---each
associated with a reachable marking in $N_2$. This approach is particularly
useful when the state space of $N_2$ is very small compared to the one of $N_1$.

We can prove that the two marked nets in our running example satisfy $M_1
\eequiv M_2$, for the relation $E$ defined in Fig.~\ref{fig:stahl}. Net $M_2$ is
obtained automatically from $M_1$ by applying a set of reduction rules,
iteratively, and in a compositional way. This process relies on the reduction
system defined in~\cite{berthomieu_counting_2019,fi2022}. As a result, we manage
to remove five places: $p_1, p_5, p_6, p_7, p_8$, and only add a new one, $a_2$.
The ``reduction system'' ($E$) also contains an extra variable, $a_1$, that does
not occur in any of the nets. It corresponds to a place that was introduced and
then removed in different reduction steps.

Polyhedral abstractions are not necessarily derived from reductions, but
reductions provide a way to automatically find interesting instances of
abstractions. Also, the equation systems obtained using structural reductions
exhibit a specific structure, that we exploit in Sect.~\ref{sec:tfg}.

\section{Combining Polyhedral Abstraction with Reachability}
\label{sec:combining}

We can define a counterpart to our notion of polyhedral abstraction which
relates to reachability formulas. We show that this equivalence can be used to
speed up the verification of properties by checking formulas on a reduced net
instead of the initial one (see Th.~\ref{th:exploration} and its corollary,
below). In the following, we assume that we have two marked nets such that
$(N_1, m_1) \reduc_E (N_2, m_2)$. Our goal is to define a relation $F_1 \eequiv
F_2$, between reachability formulas, such that $F_1$ and $F_2$ have the same
truth values on equivalent models, with respect to $E$.

\begin{definition}[Equivalence between formulas]
  \label{def:feequiv}
  Assume $F_1, F_2$ are reachable formulas with respective sets of variables,
  $V_1$ and $V_2$, in the support of $E$. We say that formula $F_2$ implies
  $F_1$ up-to $E$, denoted $F_2 \eimplies F_1$, if for every marking $m_2' \in
  \Nat^{V_2}$ such that $m'_2 \models E \land F_2$ there exists at least one
  marking $m_1' \in \Nat^{V_1}$ such that $m_1' \eequiv m_2'$ and $m'_1 \models
  E \land F_1$.
  \begin{equation} \label{eq:4} F_2 \eimplies F_1 \quad \text{iff}\quad  
   \forall m'_2 . \left ( m'_2 \models E \land F_2 \right ) \, \Rightarrow\,
   \exists m'_1 . \left ( m'_1 \eequiv m'_2  \land m'_1 \models E \land F_1
   \right ) \end{equation}
  We say that $F_1$ and $F_2$ are equivalent, denoted $F_1 \reduc_E F_2$, when
  both $F_1 \eimplies F_2$ and $F_2 \eimplies F_1$.
\end{definition}

This notion is interesting when $F_1, F_2$ are reachability formulas on the nets
$N_1$, respectively $N_2$. Indeed, we prove that when $F_2 \eimplies F_1$, it is
enough to find a witness of $F_2$ in $N_2$ to prove that $F_1$ is reachable in
$N_1$.

\begin{theorem}[Finding Witnesses]
  \label{th:exploration}
  Assume $(N_1, m_1) \reduc_E (N_2, m_2)$ and $F_2 \eimplies F_1$, and take a
  marking $m'_2$ reachable in $(N_2, m_2)$ such that $m'_2 \models F_2$. Then
  there exists $m'_1 \in R(N_1, m_1)$ such that $m'_1 \eequiv m'_2$ and $m'_1
  \models F_1$.
\end{theorem}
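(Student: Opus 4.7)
The plan is to chain together the two hypotheses of the theorem. Given $m'_2 \in R(N_2, m_2)$ and $m'_2 \models F_2$, first invoke the structural assumption on reductions to turn $m'_2$ into a model of $E \land F_2$, then use the formula-equivalence hypothesis $F_2 \eimplies F_1$ to produce a candidate $m'_1$ on the $N_1$ side, and finally close the loop with condition (A3) of $E$-equivalence to promote $m'_1$ to a reachable marking of $N_1$.

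Concretely, first I would use condition (A1) of Definition~\ref{def:eequiv}: since $m'_2 \in R(N_2, m_2)$, the system $E \land \maseq{m'_2}$ is consistent, which means precisely that $m'_2 \models E$ in the over-specified sense defined just after equation~(\ref{eq:1}). Combined with the hypothesis $m'_2 \models F_2$, we therefore have $m'_2 \models E \land F_2$.

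Next, I would apply Definition~\ref{def:feequiv} to the hypothesis $F_2 \eimplies F_1$: instantiating~(\ref{eq:4}) with our $m'_2$, we obtain a marking $m'_1 \in \Nat^{V_1}$ such that $m'_1 \eequiv m'_2$ and $m'_1 \models E \land F_1$. In particular, $m'_1 \models F_1$, which gives one of the two conclusions directly, and the relation $m'_1 \eequiv m'_2$ gives the second.

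It remains to show $m'_1 \in R(N_1, m_1)$, which is exactly what condition (A3) of Definition~\ref{def:eequiv} provides: applied to the pair $m'_1 \eequiv m'_2$ with $m'_2 \in R(N_2, m_2)$, it yields $m'_1 \in R(N_1, m_1)$. Since each step is a direct unfolding of a definition, I do not foresee a real obstacle; the only subtlety worth checking carefully is that the notion of ``model'' used in Definition~\ref{def:feequiv} is the extended one (over-specified markings), so that the conclusion $m'_1 \models E \land F_1$ obtained from~(\ref{eq:4}) is of the right shape to feed into~(A3), and that the variable supports $V_1, V_2$ of $F_1, F_2$ lie in the support of $E$ as required by Definition~\ref{def:feequiv}.
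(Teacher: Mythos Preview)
Your proposal is correct and follows essentially the same route as the paper's proof: invoke (A1) to upgrade $m'_2 \models F_2$ to $m'_2 \models E \land F_2$, apply Definition~\ref{def:feequiv} to obtain $m'_1$ with $m'_1 \eequiv m'_2$ and $m'_1 \models F_1$, and then use (A3) to conclude $m'_1 \in R(N_1, m_1)$. Your added remarks about the over-specified model relation and the support condition on $V_1, V_2$ are accurate and, if anything, make the argument slightly more explicit than the paper's own version.
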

\begin{proof}
  Assume we have $m'_2$ reachable in $N_2$ such that $m'_2 \models F_2$.
  By property (A1) of $E$-equivalence (Def.~\ref{def:eequiv}), formula $E \land
  \maseq{m'_2}$ is consistent, which gives $m'_2 \models E \wedge F_2$.
  By definition of the $E$-implication $F_2 \eimplies F_1$, we get a marking
  $m'_1$ such that $m'_1 \models F_1$ and $m'_1 \eequiv m'_2$. We conclude that
  $m'_1$ is reachable in $N_1$ thanks to property~(A3).
  \qed
\end{proof}
Hence, when $F_2 \eimplies F_1$ holds, $F_2$ reachable in $N_2$ implies that
$F_1$ is reachable in $N_1$. We can derive stronger results when $F_1$ and $F_2$
are equivalent. 
\begin{corollary}
  Assume $(N_1, m_1) \reduc_E (N_2, m_2)$ and $F_1 \equiv_E F_2$, with $\fv{F_i}
  \subseteq P_i$ for all $i \in 1..2$, then: \emph{(CEX)} property $F_1$ is
  reachable in $N_1$ if and only if $F_2$ is reachable in $N_2$ ; and
  \emph{(INV)} $F_1$ is an invariant on $N_1$ if and only if $F_2$ is an
  invariant on $N_2$. 
\end{corollary}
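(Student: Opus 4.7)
My plan is to derive both parts of the corollary from Theorem~\ref{th:exploration}, relying on two structural observations. First, the three conditions in Definition~\ref{def:eequiv} are symmetric in $N_1$ and $N_2$, so $(N_1, m_1) \reduc_E (N_2, m_2)$ also yields $(N_2, m_2) \reduc_E (N_1, m_1)$. Second, the equivalence $F_1 \equiv_E F_2$ unpacks into the pair of implications $F_1 \eimplies F_2$ and $F_2 \eimplies F_1$.

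For part (CEX), I would apply Theorem~\ref{th:exploration} once in each direction. For the $\Leftarrow$ direction, take a witness $m'_2 \in R(N_2, m_2)$ of $F_2$; the theorem applied to $F_2 \eimplies F_1$ directly produces $m'_1 \in R(N_1, m_1)$ with $m'_1 \models F_1$. The $\Rightarrow$ direction is handled symmetrically, invoking the theorem on the swapped abstraction $(N_2, m_2) \reduc_E (N_1, m_1)$ together with $F_1 \eimplies F_2$.

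For part (INV), I would use the duality $\mathrm{AG}\, F \equiv \neg\, \mathrm{EF}\, \neg F$ recalled in Sect.~\ref{sec:fundations}: $F_i$ is an invariant on $N_i$ iff $\neg F_i$ is not reachable in $N_i$. The claim then reduces to showing that $\neg F_1$ is reachable in $N_1$ iff $\neg F_2$ is reachable in $N_2$, which is exactly part (CEX) applied to the pair $(\neg F_1, \neg F_2)$.

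The hard part will be justifying this last application, i.e.\ showing that $F_1 \equiv_E F_2$ together with $\fv{F_i} \subseteq P_i$ imply $\neg F_1 \equiv_E \neg F_2$. A straightforward negation of Definition~\ref{def:feequiv} does not go through, because $\eimplies$ is an existential statement over equivalent models rather than a pointwise logical implication. My preferred route is to first establish a sharper lemma on reachable markings: whenever $m'_1 \in R(N_1, m_1)$ and $m'_2 \in R(N_2, m_2)$ satisfy $m'_1 \eequiv m'_2$, then $m'_1 \models F_1$ iff $m'_2 \models F_2$. The lemma is proved by unfolding the witness provided by $\eimplies$ and combining it with conditions (A1) and (A3) of Definition~\ref{def:eequiv}, so as to pin down the truth value of $F_2$ on the specific $m'_2$ at hand rather than on some other equivalent marking. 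Such a pointwise statement subsumes both (CEX) and (INV), since (A1) and (A3) together ensure that every reachable marking of one net is $\eequiv$-related to at least one reachable marking of the other.
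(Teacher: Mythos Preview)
The paper states the Corollary without proof, treating it as immediate from Theorem~\ref{th:exploration}. Your argument for (CEX) is correct and matches what is intended: invoke Theorem~\ref{th:exploration} in each direction, using the symmetry of the three conditions in Definition~\ref{def:eequiv}.

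For (INV) you correctly isolate the real obstacle: Definition~\ref{def:feequiv} is an existential witness-transfer condition and does not interact well with negation. However, your proposed pointwise lemma cannot be established from the stated hypotheses---in fact part (INV) itself fails under them. Take $P_1 = \{p,q\}$, $P_2 = \{r\}$, and $E \defeq (r = p + q)$, with a single token commuting between $p$ and $q$ in $N_1$ (so $R(N_1,m_1) = \{(1,0),(0,1)\}$) and $N_2$ reduced to the single marking $r = 1$. All conditions of Definition~\ref{def:eequiv} hold. With $F_1 \defeq (p \geq 1)$ and $F_2 \defeq (r \geq 1)$ one verifies $F_1 \equiv_E F_2$ directly from Definition~\ref{def:feequiv}; moreover $F_2$ is precisely the projection of $E \land F_1$ computed by the procedure of Sect.~\ref{sec:procedure}. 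Yet $F_2$ is an invariant on $N_2$ while $F_1$ is not on $N_1$. The reachable markings $m'_1 = (p{=}0,\,q{=}1)$ and $m'_2 = (r{=}1)$ satisfy $m'_1 \eequiv m'_2$, $m'_2 \models F_2$, and $m'_1 \not\models F_1$, refuting both (INV) and your lemma.

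Your instinct to route (INV) through (CEX) applied to $(\neg F_1,\neg F_2)$ is therefore the right one, but the missing ingredient $\neg F_1 \equiv_E \neg F_2$ genuinely does not follow from $F_1 \equiv_E F_2$. It must be obtained separately---which is in effect what the Remark at the end of Sect.~\ref{sec:procedure} does, by projecting the negated formula rather than negating the projection.
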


Theorem~\ref{th:exploration} means that we can check the reachability (or
invariance) of a formula on the net $N_1$ by checking instead the reachability
of another formula ($F_2$) on $N_2$. But it does not indicate how to compute a
good candidate for $F_2$. By Definition~\ref{def:feequiv}, a natural choice is
to select $F_2 \defeq E \land F_1$. 
We can actually do a bit better. It is enough to choose a formula $F_2$ that has
the same (integer points) solution as $E \land F_1$ over the places of $N_2$.
More formally, let ${A} \defeq \fv{E} \setminus P_2$ be the set of ``additional
variables'' from $E$; variables occurring in $E$ which are not places of the
reduced net $N_2$. Then if $F_2$ has the same integer solutions over
$\Nat^{P_2}$ than the Presburger formula $\exists A . \, (E \land F_1)$, we have
$F_1 \eequiv F_2$. We say in this case that $F_2$ is the projection of $E \land
F_1$ on the set $P_2$, by eliminating the variables in $A$. 

In the next section, we show how to compute a candidate projection formula
without resorting to a classical, complete variable elimination procedure on $E
\land F_1$. This eliminates a potential source of complexity blow-up.

We can use Fourier-Motzkin elimination (FM) as a point of reference. Given a
system of linear inequalities $S$, with variables in $V$, we denote $\FM(S)$ the
system obtained by FM elimination of variables in $A$ from $S$. (We do not
describe the construction of $\FM(S)$ here, since there exists many good
references~\cite{imbert1993fourier,monniaux2010quantifier} on the subject.)
Borrowing an intuition popularized by Pugh in its Omega test~\cite{omega}, we
can define two distinct notions of ``shadows'' cast by the projection of $S$. On
the one hand, we have the \emph{real shadow}, relative to $A$, which are the
integer points (in $\Nat^{V \setminus A}$) solutions of $\FM(S)$. On the other
hand, the \emph{integer shadow} of $S$ is the set of markings $m'$ with an
integer point antecedent in $S$. We need the latter to check a query on $N_1$. A
main source of complexity is that the (real) shadow is only exact on rational
points and may contain strictly more models than the integer shadow. Moreover,
while the real shadow of a convex region will always be convex, it may not be
the case with the integer shadow. Like with the real shadow, the set of
equations computed with our fast projection will always be convex. Unlike FM,
our procedure will compute an under-approximation of the integer shadow, not an
over-approximation. Also, we never rearrange or create more inequalities than
the one contained in $S$; but instead rely on variable substitution.

We illustrate the concepts introduced in this section on our running example,
with the reduction system from Fig.~\ref{fig:stahl}. With our notations, we try
to eliminate variables in $A \defeq \{a_1, p_1, p_5, p_6, p_7, p_8\}$ and keep
only those in $P_2 \defeq \{a_2, p_0, p_2, p_3, p_4\}$.

Take the formula $G_1 \defeq (p_5 + p_6 \leqslant p_8)$. Using substitutions
from constraints in $E$, namely the fact that $(a_2 = p_7 + p_8 + p_5)$ and
$(p_6 = p_0 + p_2 + p_3 + p_5 + p_7)$, we can remove occurrences of $a_1, p_1,
p_6, p_8$ from $E \land G_1$, leaving the resulting equation $(3\,p_5 + 2\,p_7 +
p_0 + p_2 + p_3 \leqslant a_2) \land (a_2 = p_5 + p_7 + p_8)$, that still refers
to $p_5$ and $p_7$.
We observe that non-trivial coefficients (like $3\,p_5$) can naturally occur
during this process, even though all the coefficients are either $1$ or $-1$ in
the initial constraints.
We can remove the remaining variables to obtain an exact projection of $G_1$
using our fast projection method, described below.
The result is the formula $G_2 \defeq (p_0 + p_2 + p_3 \leqslant a_2)$. 

Another example is $H_1 \defeq (p_6 = p_8)$. We can prove that the integer
shadow of $E \land H_1$, after projecting the variables in $A$, are the
solutions to the PA formula $(a_2 - p_0 - p_2 - p_3 \equiv 0 \bmod 2) \land (a_2
\geqslant p_0 + p_2 + p_3)$. This set is not convex, since $(a_2 = 0 \land p_0 =
p_1 = p_2 = p_3 = 0)$ and $(a_2 = 2 \land p_0 = p_1 = p_2 = p_3 = 0)$ are in the
integer shadow, but not $(a_2 = 1 \land p_0 = p_1 = p_2 = p_3 = 0)$ for
instance. Our fast projection method will compute the formula $H_2 \defeq (a_2 +
p_0 + p_2 + p_3 = 0)$ and flag it as an under-approximation.


\section{Projecting Formulas Using Token Flow Graphs}
\label{sec:tfg}
We describe a formula projection procedure that is tailored to the specific kind
of constraints occurring in polyhedral reductions.
The example in Fig.~\ref{fig:stahl} is representative of the ``shape'' of
reduction systems: it mostly contains equalities of the form $x = \sum x_i$,
over a sparse set of variables, but may also include some inequalities; and it
can have a very large number of literals (often proportional to the size of the
initial net). Another interesting feature is the absence of cyclic dependencies,
which underlines a hierarchical relationship between variables.

We can find a more precise and formal description of these constraints
in~\cite{spin2021}, with the definition of a \textit{Token Flow Graph} (TFG).
Basically, a TFG for a reduction system $E$ is a directed acyclic graph (DAG)
with one vertex for each variable occurring in $E$.
We consider two kinds of arcs, redundancy ($\red$) and agglomeration ($\agg$),
that correspond to two main classes of reduction rules.

Arcs for \emph{redundancy equations}, $q \red p$, correspond to equations of the
form $p = q + r + \dots$, expressing that the marking of place $p$ can be
reconstructed from the marking of $q, r, \dots$ In this case, we say that place
$p$ is \emph{removed} by arc $q \red p$, because the marking of $q$ may
influence the marking of $p$, but not necessarily the other way round.

Arcs for \emph{agglomeration equations}, $a \agg p$, represent equations of the
form $a = p + q + \dots$, generated when we agglomerate several places into a
new one. In this case, we expect that if we can reach a marking with $k$ tokens
in $a$, then we can certainly reach a marking with $k_1$ tokens in $p$, and
$k_2$ tokens in $q$, $\dots$ such that $k = k_1 + k_2 + \dots$. Hence, the
possible markings of $p$ and $q$ can be reconstructed from the markings of $a$.
In this case, it is $p, q, \dots$ which are removed. We also say that node $a$
is \emph{inserted}; it does not exist in $N_1$ but may appear as a new place in
$N_2$ unless it is removed by a subsequent reduction. We can have more than two
places in an agglomeration, see the rules in~\cite{berthomieu_counting_2019}.

\begin{figure}[tb]
  \centering

  \begin{minipage}{0.30\textwidth}
    \centering
    {
      \begin{varwidth}{\linewidth}
        \small
\begin{verbatim}
# R |- p1 = p4 + 4096
# R |- p6 = p0 + p2 + p3 + p5 + p7
# A |- a1 = p7 + p8
# A |- a2 = a1 + p5
\end{verbatim}
      \end{varwidth}}
  \end{minipage}\hfill
  \begin{minipage}{0.60\textwidth}
    \centering
    \includegraphics[width=\textwidth]{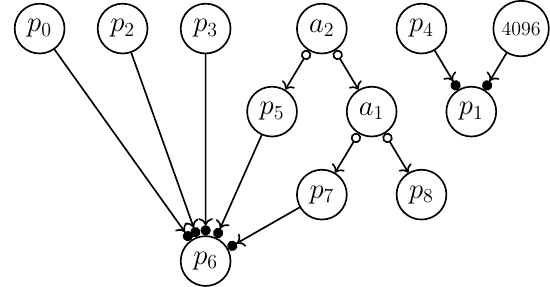}
  \end{minipage}

  \caption{Equations from our example in
    Fig.~\ref{fig:stahl} and the associated TFG.}
  \label{fig:HuConstruction_TFG}
\end{figure}

A TFG can also include nodes for \emph{constants}, used to express invariant
statements on the markings of the form $p + q = k$. To this end, we assume that
we have a family of disjoint sets $K(n)$ for each $n$ in $\Nat$, such that the
``valuation'' of a node $v \in K(n)$ is always $n$. We use $K$ to denote the set
of all constants.

\begin{definition}[Token Flow Graph]
  \label{def:tfg}
  A TFG with set of places $P$ is a directed graph $(P, S, \tfgR, \tfgA)$ such
  that:
  \begin{itemize}
    \item $V = P \cup S$ is a set of vertices (or nodes) with $S \subset K$ a
    finite set of constants,
    \item $\tfgR \in V\times V$ is a set of \emph{redundancy arcs}, $v \red v'$,
    \item $\tfgA \in V\times V$ is a set of \emph{agglomeration arcs}, $v \agg
    v'$, disjoint from $R$.
  \end{itemize}
\end{definition}

The main source of complexity in this approach arises from the need to manage
interdependencies between $\tfgA$ and $\tfgR$ arcs, that is situations where
redundancies and agglomerations are combined. This is not something that can be
easily achieved by looking only at the equations in $E$ and thus motivates the
need for a specific data structure.

We define several notations that will be useful in the following. We use the
notation $v \rightarrow v'$ when we have $(v \red v')$ or $(v \agg v')$. We say
that a node $v$ is a \textit{root} if it is not the target of an arc.
A sequence of nodes $(v_1, \dots, v_n)$ in $V^n$ is a \textit{path} if for all
$1 \leqslant i < n$ we have $v_i \rightarrow v_{i+1}$. We use the notation $v
\rightarrow^\star v'$ when there is a path from $v$ to $v'$ in the graph, or
when $v = v'$.
We write $v \agg X$ when $X$ is the largest subset $\{v_1, \dots, v_k\}$ of $V$
such that $X \neq \emptyset$ and $v \agg v_i$ for all $i \in 1..k$. And
similarly with reductions, $X \red v$.
Finally, the notation $\succs{v}$ denotes the set of successors of $v$, that is:
$\succs{v} \defeq \{ v' \in V \setminus \{v\} \,\mid\, v \rightarrow^\star v'
\}$.
We extend it to a set of variables $X$ with $\succs{X} = \bigcup_{x \in X}
\succs{x}$.

We display an example of TFG in Fig.~\ref{fig:HuConstruction_TFG} (right), which
corresponds to the reduction equations generated on our running example, where
annotations \texttt{R} and \texttt{A} indicate if an equation is a redundancy or
an agglomeration. TFGs were initially defined in~\cite{spin2021,sttt22} to
efficiently compute the set of concurrent places in a net, that is all pairs of
places that can be marked simultaneously in some reachable marking. We reuse
this concept here to project reachability formulas.

\subsubsection*{High Literal Factor.}

The projection procedure, described next, applies to \emph{cubes} only, meaning
a conjunction of literals $\bigwedge_{i \in 1..n} \alpha_i$.
Given a formula $F_1$, assumed in DNF, we can apply the projection procedure to
each of its cubes, separately. Then the projection of $F_1$ is the disjunction
of the projected cubes.
We assume from now on that $F_1$ is a cube formula.

We assume that every literal is in \emph{normal form}, $\alpha_i \defeq
(\sum_{p_j \in \B} k^i_j \, p_j) + b_i \geqslant 0$, where the $k^i_j$'s and
$b_i$ are in $\Int$. In the following, we denote $\alpha_i(q)$ the coefficient
associated with variable $q$ in $\alpha_i$. We also use $\max_X \alpha_i$ and
$\min_X \alpha_i$ for the maximal (resp. minimal) coefficient associated with
variables in $X \subseteq \B$.
\[
  \alpha_i \ =\ \sum_{p \in \B} \alpha_i(p) \, p + b_i
  \qquad \text{and} \qquad
  {\textstyle \max_X}\, \alpha_i \ =\ \max\ \left \{ \alpha_i(p) \mid p \in X \right \}
\]

We define the \emph{Highest Literal Factor} (HLF) of a set of variables $X$ with
respect to a set of normalized literals $(\alpha_i)_{i \in I}$. In the simplest
case, the HLF of $X$ with respect to a single literal, $\alpha$, is the subset
of variables in $X$ with the highest coefficients in $\alpha$. Then, the HLF of
$X$ with respect to a set of literals is the (possibly empty) intersection of
the HLFs of $X$ with respect to each literal.
When non-empty, it means that at least one variable in $X$ always has the
highest coefficient, and we say then that the whole set $X$ is \emph{polarized}
with respect to the literals~$(\alpha_i)$.
\begin{equation*}
  \begin{split}
    \HLF{\alpha_i} & =\ \left \{ p \in X \mid \alpha_i(p) =
      {\textstyle \max_X}\, \alpha_i \right \} \\
    \HLF{\alpha_i}_{i \in I} & =\ \bigcap_{i \in I}
    \HLF{\alpha_i}
  \end{split}
\end{equation*}

\begin{definition}[Polarized Set of Constraints]
  A set of variables $X \subseteq \fv{\C}$ is said {polarized} with respect to a
  set of normalized literals $\C$ when\linebreak $\HLF{\C} \neq \emptyset$.
\end{definition}

We prove in the next section that our procedure is exact when the variables we
eliminate are polarized. While this condition seems very restrictive, we observe
that it is often true with the queries used in our experiments.

\subsubsection*{Example.}

Let us illustrate our approach with two examples. Assume we want to eliminate an
agglomeration $a \agg \{q, r\}$, meaning that we have the condition $a = q + r$
and that both $q$ and $r$ must disappear. We consider two examples of systems,
each with only two literals, and with free variables $\{p, q\}$. 
\begin{center}
  \includegraphics{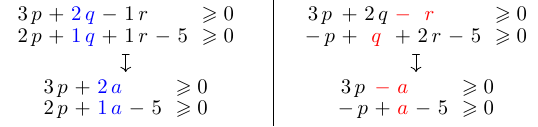}
\end{center}

In the left example, the set $\{q, r\}$ is polarized with respect to the initial
system (top), with the highest literal factor being $q$. So we replace $q$ with
$a$ in both literals and eliminate $r$. Uninvolved variables (the singleton
$\{p\}$ in this case) are left unchanged. We can prove that both systems, before
and after substitution, are equivalent. For instance, every solution in the
resulting system can be associated with a solution of the initial one by taking
$q = a$ and $r = 0$.

The initial system on the right (top) is non-polarized: the HLF relative to
$\{q, r\}$ is $\{q\}$ for the first literal ($+ 2\, q$ versus $- r$) and $\{r\}$
in the second ($+ q$ versus $2\, r$). So we substitute $a$ to the variable with
the lowest literal factor, in each literal, and remove the other variable ($r$
in the first literal and $q$ in the second). This is sound because we take into
account the worst case in each case. But this is not complete, because we may be
too pessimistic. For instance, the resulting system has no solution for $p = 2$;
because it entails $a \leq 6$ and $a \geq 7$. But $p = 2, q = 3, r = 2$ is a
model of the initial system.

Next, we give a formal description of our projection procedure as a sequence of
formula rewriting steps and prove that the result is exact (we have $F_2 \eequiv
F_1$) when all the reduction steps corresponding to an agglomeration are on
polarized variables.


\section{Formal Procedure and Proof of Correctness}
\label{sec:procedure}

In all the results of this section, we assume that $N_1$ and $N_2$ are two nets,
with respective sets of places $P_1, P_2$ and initial markings $m_1, m_2$, such
that $(N_1, m_1) \reduc_E (N_2, m_2)$. Given a formula $F_1$ with support on
$N_1$, we describe a procedure to project formula $E \land F_1$ into a new
formula, $F_2$, with support on $N_2$. Our projection will always lead to a
sound formula, meaning $F_2 \eimplies F_1$. It is also able in many cases (see
some statistics in Sect.~\ref{sec:performance}) to result in an exact formula,
such that $F_2 \eequiv F_1$.

\subsubsection*{Constraints on TFGs.}
To ensure that a TFG preserves the semantics of the system $E$ we must introduce
a set of constraints on it. A \emph{well-formed TFG $G$ built from $E$} is a
graph with one node for every variable and constant occurring in $E$, such that
we can find one set of arcs, either $X \red v$ or $v \agg X$, for every equation
of the form $v = \sum_{v_i \in X} v_i$ in $E$. We deal with inequalities by
adding slack variables. We also impose additional constraints which reflect that
the same place cannot be removed more than once. Note that the places of $N_2$
are exactly the root of $G$ (if we forget about constants).

\begin{definition}[Well-formed TFG]
  \label{def:well-formed-tfg}
  A TFG $G = (P, S, \tfgR, \tfgA)$ for the equivalence statement $(N_1, m_1)
  \eequiv (N_2, m_2)$ is \emph{well-formed} when the following constraints are
  met, with $P_1, P_2$ the set of places in $N_1, N_2$:
  \begin{description}

  \item[\textbf{(T1)}] {nodes in $S$ are roots}: if $v \in S$ then $v$ is a root
  of $G$;
  \item[\textbf{(T2)}] {nodes can be removed only once}: it is not possible to
  have $v \agg w$ and $v' \rightarrow w$ with $v \neq v'$, or to have both $v
  \red w$ and $v \agg w$;
  \item[\textbf{(T3)}] {$G$ contains all and only the equations in $E$}: we have
  $v \agg X$ or $X \red v$ if and only if the equation $v = \sum_{v_i \in X}
  v_i$ is in $E$;
  \item[\textbf{(T4)}] {$G$ is acyclic} and roots in $P \setminus S$ are exactly
  the set $P_2$.
  \end{description}
\end{definition}

Given a relation $(N_1, m_1) \eequiv (N_2, m_2)$, the well-formedness conditions
are enough to ensure the unicity of a TFG (up-to the choice of constant nodes)
when we set each equation to be either in $A$ or in $R$. In this case, we denote
the graph~$\tfg{E}$. In practice, we use the tool \reduce~\cite{tinaToolbox} to
generate the reduction system $E$.

\subsubsection*{Formula Rewriting.}
We assume given a relation $(N_1, m_1) \reduc_E (N_2, m_2)$, and its associated
well-formed TFG written $\tfg{E}$.
We consider that $F_1$ is a cube of $n$ literals, $F_1 \defeq \bigwedge_{i \in
1..n} \alpha^0_i$.
Our algorithm rewrites each $\alpha^0_i$ by applying iteratively an elimination
step, described next, according to the constraints expressed in $\tfg{E}$.
The final result is a conjunction $F_2 = \bigwedge_{i \in 1..n} \beta_i$, where
each literal $\beta_i$ has support in $N_2$.
Rewriting can only replace a variable with a group of other variables that are
its predecessors in the TFG, which ensures termination in polynomial time (in
the size of $E$).
Although the result has the same number of literals, it usually contains many
redundancies and trivial constant comparisons, so that, after simplification,
$F_2$ can actually be much smaller than $F_1$.

A reduction step (to be applied repeatedly) takes as input the current set of
literals, $\C = (\alpha_i)_{i \in 1..n}$, and modifies it. To ease the
presentation, we also keep track of a set of variables, $\B$ such that
$\bigcup_{i \in 1..n} \fv{\alpha_i} \subseteq \B$.

An elimination step is a reduction written $(\B,\C) \mapsto (\B', \C')$ where
$\C = (\alpha_i)_{i \in 1..n}$ and $\B'\subsetneq \B$, defined as one of the
three cases below (one for redundancy, and two for agglomerations, depending on
whether the removed variables are polarized or not). We assume that literals are
in normal form and that $X$ is a set of variables $\{x_1, \dots, x_k\}$.
Note the precondition $\succs{X} \cap B = \emptyset$ on all rules, which forces
them to be applied bottom-up on the TFG (remember it is a DAG).
We gave a short example of how to apply rules (AGP) and (AGD) at the end of the
previous section.\par

\begin{description}
\item[(RED)] If $X \red p$ and $\succs{p} \cap B = \emptyset$ then $(\B, \C)
\mapsto (\B' , \C')$ holds, where $\B'= \B \setminus \{p\}$ and $\C'$ is the set
of literals $\alpha'_i$ obtained by normalizing the linear constraint
$\alpha_i\{p \leftarrow x_1 + \dots + x_k\}$. That is, we substitute $p$ with
$\sum_{x_i \in X} x_i$ in $\C$, which is the meaning of the redundancy equation
(constraint (T3) in Def. \ref{def:well-formed-tfg}).\\

\item[(AGP)] If $a \agg X$ with $\succs{X} \cap B = \emptyset$, $a \in \B$, and
$X$ polarized with respect to $\C$. Then $(\B, \C) \mapsto (\B' , \C')$ holds,
where $\B'= \B \setminus X$, and, by taking $x_j \in \HLF{\C}$, we define $\C'$
as the set of literals $\alpha'_i$ obtained by normalizing the linear constraint
$\alpha_i\{x_l \leftarrow 0\}_{l \neq j}\{ x_j \leftarrow a \}$. That is, we
eliminate the variables $x_l$, different from $x_j$, from $\C$ and replace $x_j$
with $a$; where $x_j$ is a variable of $X$ that always have the highest
coefficient in each literal (among the ones of $X$).\\

\item[(AGD)] If $a \agg X$ with $\succs{X} \cap B = \emptyset$, $a \in \B$, and
$X$ is not polarized with respect to $\C$. Then $(\B, \C) \mapsto (\B' , \C')$
holds, where $\B'= \B \setminus X$ and $\C'$ is the set of literals $\alpha'_i$
obtained by normalizing the linear constraint $\alpha_i\{x_l \leftarrow 0\}_{l
\neq j}\{ x_j \leftarrow a \}$ such that $\alpha_i(x_j) = \min_X \alpha_i$.
Meaning we eliminate the variables $x_l$ different from $x_j$ from $\alpha_i$
and replace $x_j$ with $a$, where $x_j$ is a variable with the smallest
coefficient in $\alpha_i$ (among the ones of $X$). Note that the chosen variable
$x_j$ is not necessarily the same in every literal of $\C$.\par
\end{description}

Our goal is to preserve the semantics of formulas at each reduction step, in the
sense of the relations $\eimplies$ and $\eequiv$. In the following, we use $\C$
to represent both a set of literals $(\alpha_i)_{i \in I}$ and the cube formula
$\bigwedge_{i \in I} \alpha_i$. We can prove that the elimination steps
corresponding to the redundancy (RED) and polarized agglomeration (AGP) cases
preserve the semantics of the formula $\C$. On the other hand, a non-polarized
agglomeration step (AGD) may lose some markings.

\subsubsection*{Proof of the Algorithm.}

We prove the main result of the paper, namely that fast quantifier elimination
preserves the integer solutions of a system when we only have polarized
agglomerations. To this end, we need to prove two theorems. First,
Theorem~\ref{th:proj}, which entails the soundness of one step of elimination.
It also entails completeness for rules (RED) and (AGP). Second, we prove a
progress property (Th.~\ref{th:progress} below), which guarantees that we can
apply elimination steps until we reach a set of literals $C'$ with support on
the reduced net $N_2$.

\begin{theorem}[Projection Equivalence]\label{th:proj} If $(\B, \C) \mapsto
 (\B', \C')$ is a (RED) or (AGP) reduction then $\C' \eequiv \C$; otherwise $\C'
 \eimplies \C$.
\end{theorem}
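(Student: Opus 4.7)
The plan is to split Theorem~\ref{th:proj} into three cases, one per rule, and in each case to exhibit, from a given model of $E \land \C'$ (resp.\ $E \land \C$), a compatible model of $E \land \C$ (resp.\ $E \land \C'$). This unpacks the definition of $\eimplies$ via Def.~\ref{def:feequiv}, and yields $\eequiv$ whenever the construction works in both directions. Compatibility ($m'_1 \eequiv m'_2$) reduces to showing that the two models agree on their shared variables and extend to a common solution of $E$, which in each case will follow directly from how the construction handles the single equation attached to the arc being eliminated.

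For rule (RED), where $X \red p$ corresponds by (T3) to the equation $p = \sum_{x_i \in X} x_i$ in $E$, the substitution $p \leftarrow \sum x_i$ used to build $\C'$ is forced by $E$ itself. Hence any model of $E \land \C$ trivially satisfies $E \land \C'$, and conversely any model of $E \land \C'$ extends uniquely to $p$ via the equation and then satisfies $\C$. For rule (AGP), with $a \agg X$ (equation $a = \sum x_i$) and $x_j \in \HLF{\C}$ a variable having the maximum coefficient in each literal among those of $X$, the soundness direction starts from $m'_2 \models E \land \C'$ and defines $m'_1$ by $m'_1(x_j) \defeq m'_2(a)$, $m'_1(x_l) \defeq 0$ for $l \neq j$, and $m'_1 \defeq m'_2$ elsewhere; then $\sum_l \alpha_i(x_l)\, m'_1(x_l) = \alpha_i(x_j)\, m'_2(a)$ in every literal, so each $\alpha'_i$ at $m'_2$ equals the corresponding $\alpha_i$ at $m'_1$, giving $m'_1 \models \C$. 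For the completeness direction, we put $m'_2(a) \defeq \sum_l m'_1(x_l)$ and rely on the polarization inequality $\sum_l \alpha_i(x_l)\, m'_1(x_l) \leq \alpha_i(x_j) \sum_l m'_1(x_l) = \alpha_i(x_j)\, m'_2(a)$, so $\alpha_i \geq 0$ at $m'_1$ implies $\alpha'_i \geq 0$ at $m'_2$. For rule (AGD), only soundness is claimed: given $m'_2 \models E \land \C'$, pick any $m'_1$ with $m'_1(x_l) \geq 0$ and $\sum_l m'_1(x_l) = m'_2(a)$ (for instance, concentrating all mass on a single arbitrary $x_l$); then, using that $x_{j(i)}$ is the variable of minimum coefficient in literal $\alpha_i$, the value of $\alpha'_i$ at $m'_2$ equals $(\min_X \alpha_i)\cdot m'_2(a) + (\text{rest}) \leq \sum_l \alpha_i(x_l)\, m'_1(x_l) + (\text{rest})$, i.e.\ the value of $\alpha_i$ at $m'_1$; chaining with $\alpha'_i(m'_2) \geq 0$ gives $m'_1 \models \C$.

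The main technical obstacle is ensuring that the constructed $m'_1$ satisfies \emph{all} of $E$, and not merely the one equation attached to the current arc. This is where the preconditions $\succs{X} \cap \B = \emptyset$ and $\succs{p} \cap \B = \emptyset$, together with the well-formedness constraints (T1)--(T4) of Def.~\ref{def:well-formed-tfg}, do the real work: by acyclicity and the ``removed only once'' property, the variables being modified in the current step appear in no other equation of $E$ that still constrains variables in $\B$, so freely assigning them cannot invalidate any other equation. One has to argue that the equations of $E$ split into those attached to arcs already processed---irrelevant to $\C$ and $\C'$ since their variables have left $\B$---and those above the current arc, which do not mention the variables being eliminated; the only equation that couples the two sides is the one tied to the current arc, and it is satisfied by construction. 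Non-negativity of the assigned values is automatic, and compatibility of $m'_1$ and $m'_2$ follows immediately since they coincide on $\B \cap \B'$.
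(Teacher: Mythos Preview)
Your proposal is correct and follows essentially the same decomposition as the paper: a soundness lemma ($\C' \eimplies \C$, all three rules) and a completeness lemma ($\C \eimplies \C'$, (RED) and (AGP) only), with the same explicit witness markings in the (RED) and (AGP) cases. Two small differences are worth noting. First, the ``consistency with all of $E$'' obstacle you treat in your third paragraph is handled in the paper not by re-deriving the split from (T1)--(T4), but by invoking a \emph{token-propagation} lemma proved in~\cite{spin2021}: any redistribution of $m(a)$ tokens over the children of an agglomeration extends to a model of $E$, with possible changes only below $X$, which are harmless since $\succs{X}\cap\B=\emptyset$. Second, for (AGD) the paper takes $m'_1$ to be \emph{any} extension of $m'_2$ to $\B$ (such an extension exists because $m'_2\models E$); the minimality inequality $\sum_l \alpha(x_l)\,m'_1(x_l)\geq \alpha(x_j)\,m'_2(a)$ then holds for whatever values the $x_l$ receive, so there is no need to pick a specific redistribution or to re-invoke token propagation in that case.
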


We prove Th.~\ref{th:proj} in two steps. We start by proving that elimination
steps are sound, meaning that the integer solutions of $\C'$ are also solutions
of $\C$ (up-to $E$). Then we prove that elimination is also complete for rules
(RED) and (AGP). In the following, we use $\C$ to represent both a set of
literals $(\alpha_i)_{i \in I}$ and the cube formula $\bigwedge_{i \in I}
\alpha_i$.

\begin{lemma}[Soundness]
  If $(\B, \C) \mapsto (\B' , \C')$ then $\C' \eimplies \C$.
\end{lemma}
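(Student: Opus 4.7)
The plan is to establish $\C' \eimplies \C$ by explicit construction. Given any $m' \in \Nat^{\B'}$ with $m' \models E \land \C'$, I will build an extension $m \in \Nat^{\B}$ of $m'$ (agreeing with $m'$ on the shared domain $\B'$) that satisfies $E \land \C$. Since $m$ extends $m'$, the two are compatible and $m \eequiv m'$ follows immediately. The values of $m$ on the new variables $\B \setminus \B'$ are dictated by the equation in $E$ associated with the arc processed by the rewriting step, so the proof splits into one subcase per rule.

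For rule (RED) with $X \red p$ and equation $p = \sum_{x \in X} x$, I would set $m(p) \defeq \sum_{x \in X} m'(x)$, which is exactly what the equation forces. Since $\alpha'_i$ is obtained from $\alpha_i$ by substituting $\sum_{x \in X} x$ for $p$, a direct computation yields $\alpha_i(m) = \alpha'_i(m') \geq 0$, hence $m \models \C$. For rule (AGP) with $a \agg X$ and $x_j \in \HLF{\C}$, I would set $m(x_j) \defeq m'(a)$ and $m(x_l) \defeq 0$ for $l \neq j$. The equation $a = \sum_{x \in X} x$ is satisfied, and by the very definition of $\alpha'_i$ (substituting $a$ for $x_j$ and $0$ for the other $x_l$) the same identity $\alpha_i(m) = \alpha'_i(m') \geq 0$ holds.

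The hard case is (AGD): the substituted variable $x_{j_i}$ depends on the literal $i$, but a single extension $m$ must validate all literals simultaneously. The key point is that the $j_i$'s correspond to \emph{minimum} coefficients of $\alpha_i$ on $X$, so \emph{any} nonneg distribution of mass $m'(a)$ across $X$ works. Concretely, I would simply fix some $x_1 \in X$ and set $m(x_1) \defeq m'(a)$, $m(x_l) \defeq 0$ for $l \neq 1$. Letting $r_i$ collect the terms of $\alpha_i$ not involving variables of $X$, each literal evaluates as
\[
\alpha_i(m) \ =\ \alpha_i(x_1) \cdot m'(a) + r_i(m') \ \geq\ ({\textstyle \min_X}\, \alpha_i) \cdot m'(a) + r_i(m') \ =\ \alpha'_i(m') \ \geq\ 0,
\]
using $m'(a) \geq 0$ and $\alpha_i(x_1) \geq \min_X \alpha_i$. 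Geometrically, this is the fact that a linear function on the nonneg simplex $\{v \geq 0 \mid \sum v_i = m'(a)\}$ attains its minimum at a vertex.

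The last point to confirm is $m \models E$ in full, not merely the single equation used by the rule. Here I would rely on the precondition $\succs{(\B \setminus \B')} \cap \B = \emptyset$ common to all three rules: any other equation of $E$ involving a new variable of $m$ necessarily has its head in $\succs{(\B \setminus \B')}$, hence outside $\B$, so its value can be freely readjusted without perturbing $m|_{\B'} = m'$. Starting from $m$ and propagating values downstream along the acyclic TFG via the $+1$-coefficient sum equations produces a nonneg model of $E$ on all of $V_E$ extending $m$, which furnishes the required witness and concludes the argument.
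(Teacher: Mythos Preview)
Your proof is correct and follows essentially the same approach as the paper: case-split on the rule, build an explicit extension $m$ of $m'$, and verify $m \models E$ (via downstream propagation along the acyclic TFG, which the paper invokes as the token-propagation property from~\cite{spin2021}) and $m \models \C$. The only notable variation is in case (AGD), where you fix a specific vertex distribution and then argue propagation, whereas the paper more economically takes an \emph{arbitrary} extension of $m'$ satisfying $E$---guaranteed to exist since $m' \models E$---and proves the minimality inequality $\sum_{i} \alpha(x_i)\,m(x_i) \geq (\min_X \alpha)\cdot m'(a)$ for any nonnegative distribution of $m'(a)$ over $X$.
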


\begin{proof}
  Take a valuation $m'$ of $\mathbb{N}^{\B'}$ such that $m' \models E \land
  \C'$. We want to show that there exists a marking $m$ of $\mathbb{N}^\B$ such
  that $m \eequiv m'$ satisfying $E \land \C$. 
  
  We have three possible cases, corresponding to rule (RED), (AGP) or (AGD). In
  each case, we provide a marking $m$ built from $m'$. Since $m \eequiv m'$ is
  enough to prove $m \models E$, we only need to check two properties: first
  that $m \eequiv m'$ ($\ast$), then that $m \models \alpha$ for every literal
  $\alpha$ in $\C$ ($\ast\ast$). 
  
  \begin{description}
    \item[(RED)] In this case we have $X \red p$ and $\B' = \B \setminus \{p\}$,
    with $X = \{x_1, \dots, x_k\}$. We can extend $m'$ into the unique valuation
    $m$ of $\mathbb{N}^\B$ such that $m(p) = m'(x_1) + \dots + m'(x_k)$ and
    $m(v) = m'(v)$ for all other nodes $v$ in $\B \setminus \{p\}$. Since $p =
    x_1 + \dots + x_k$ is an equation of $E$ (condition (T3)) we obtain that
    $m'\eequiv m$ and therefore also $m\models E$ ($\ast$). 
    
    We now prove that $m \models \C$. The literals in $\C'$ are of the form
    $\alpha\sigma$ with $\sigma$ the substitution $\{p \leftarrow x_1 + \dots +
    x_k\}$ and $\alpha$ in $\C$. Remember that, with our notations (e.g.
    Equation~\eqref{eq:1} in page~\pageref{eq:1}), we have $m \models \alpha$ if
    and only if $\alpha\{m\}$ SAT (is satisfiable). By hypothesis, $m'\models
    \alpha\sigma$. Hence, $\alpha\sigma\{m'\}$ SAT, which is equivalent to
    $\alpha\{m\}$ SAT, and therefore $m \models \alpha$ ($\ast\ast$), as
    required.
    \medbreak
  
    \item[(AGP)] In this case we have $a \agg X$ with $X = \{x_1, \dots, x_k\}$,
    polarized relative to $\C$, and $B' = B \setminus X$. We consider $x_j$ in
    $X$ the variable in $\HLF{\C}$ that was chosen in the reduction; meaning
    that $\C'$ is a conjunction of literals of the form $\alpha\{x_l \leftarrow
    0\}_{l \neq j}\{ x_j \leftarrow a \}$, with $\alpha$ a literal of $\C$.
    Given $m'$ a model of $\C'$, we define $m$ the unique marking on $\Nat^{\B}$
    such that $m(x_j) = m(a)$, $m(x_l) = 0$ for all $l \neq j$, and $m(v) =
    m'(v)$ for all other variables $v$ in $B \setminus X$.

    From Lemma~2 of~\cite{spin2021} (the ``token propagation'' property of
    TFGs), we know that any distribution of $m(a)$ tokens, in place $a$, over
    the $(x_i)_{i \in 1..k}$, is also a model of $E$. Which means that $m
    \models E$ $(\ast)$.
    Note that the token propagation Lemma does not imply that the value of
    $m(v)$, for the nodes ``below $X$'' ($v$ in $\succs{X}$), is unchanged. This
    is not problematic, since the side condition $\succs{X} \cap \B = \emptyset$
    ensures that these nodes are not in $\B$, and therefore cannot influence the
    value of $\alpha\{ m \}$. 
    
    Consider a literal $\alpha$ in $\C$. Since $m'\models \C'$, we have that
    $\alpha\{x_l \leftarrow 0\}_{l \neq j}\{ x_j \leftarrow a \}\{ m'\}$ SAT,
    which is exactly $\alpha\{ m \}$, since $\succs{X} \cap B = \emptyset$, as
    needed ($\ast\ast$).
    \medbreak

    \item[(AGD)]  In this case we have $a \agg X$ with $X = \{x_1, \dots,
    x_k\}$, non-polarized relative to $\C$, and $B' = B \setminus X$. We know
    that $m' \models E$, therefore there is a marking $m$ of $\Nat^{\B}$ that
    extends $m'$ such that $m \eequiv m'$ ($\ast$).
    
    Consider a literal $\alpha$ in $\C$. By definition of (AGD), we have an
    associated literal $\alpha'\defeq \alpha\{x_l \leftarrow 0\}_{l \neq j}\{
    x_j \leftarrow a \}$ in $\C'$ such that $\alpha(x_j) = \min_X \alpha_i$.
    Since the coefficient of $x_j$ is minimal, we have that $\sum_{i \in 1..k}
    \alpha(x_i)\, m(x_i) \geq \alpha(x_j)\, \sum_{i \in 1..k} m(x_i) =
    \alpha(x_j) \, m'(a)$, and therefore $\sum_{v \in \B} \alpha(v) \, m(v) \geq
    \sum_{v \in \B'} \alpha'(v) \, m'(v)$. The result follows from the fact that
    $\alpha\{ m\}$ SAT ($\ast\ast$).
  \end{description}
  \qed
\end{proof} 

Now we prove that our quantifier elimination step, for the (RED) and (AGP)
cases, leads to a complete projection, that is any solution of the initial
formula corresponds to a projected solution in the projected formula.

\begin{lemma}[Completeness]
  If $(\B, \C) \mapsto (\B', \C')$ is a (RED) or (AGP) reduction then $\C
  \eimplies \C'$.
\end{lemma}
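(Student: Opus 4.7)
The plan is to mirror the soundness proof, reversing the direction. Starting from $m \in \Nat^{\B}$ with $m \models E \land \C$, I will take $m'$ to be the restriction of $m$ to $\B'$, and then verify three things in order: (i) $m \eequiv m'$, with $m$ itself serving as the common witness; (ii) $m' \models E$, which follows from (i) together with the fact that $m$ extends $m'$ consistently; and (iii) $m' \models \alpha'$ for every literal $\alpha'$ of $\C'$, which is the only substantive check. Because the TFG side condition $\succs{X}\cap \B = \emptyset$ (or $\succs{p} \cap \B = \emptyset$) ensures that none of the eliminated variables can reappear in the support of literals already processed, the restriction is well-defined and $\C'$ only mentions variables of $\B'$.

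For the (RED) case, the argument is essentially the reverse of the soundness computation. Each literal $\alpha'$ in $\C'$ has the form $\alpha\sigma$ with $\sigma = \{p \leftarrow x_1 + \dots + x_k\}$, and the equation $p = x_1 + \dots + x_k$ belongs to $E$ by constraint (T3). Since $m \models E$ gives $m(p) = m(x_1) + \dots + m(x_k)$, the numeric values of $\alpha\{m\}$ and $\alpha\sigma\{m'\}$ coincide, so $m \models \alpha$ transfers directly to $m' \models \alpha'$.

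The main obstacle is the (AGP) case, where the polarization hypothesis has to do genuine work. For a literal $\alpha$ in $\C$ and its image $\alpha' = \alpha\{x_l \leftarrow 0\}_{l \neq j}\{x_j \leftarrow a\}$ in $\C'$, a direct expansion using $\alpha'(a) = \alpha(a) + \alpha(x_j)$ and the agglomeration equation $m(a) = \sum_i m(x_i)$ yields the identity
\[
\alpha'\{m'\} - \alpha\{m\} \;=\; \alpha(x_j)\,m(a) - \sum_{i=1}^k \alpha(x_i)\,m(x_i) \;=\; \sum_{i=1}^{k}\bigl(\alpha(x_j) - \alpha(x_i)\bigr)\,m(x_i).
\]
The key point is that, because $X$ is polarized with respect to $\C$ and $x_j \in \HLF{\C}$, the index $j$ maximizes $\alpha(x_i)$ over $X$ simultaneously for every literal of $\C$, so each summand is non-negative. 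Consequently $\alpha'\{m'\} \geq \alpha\{m\} \geq 0$, giving $m' \models \alpha'$ and hence $m' \models \C'$.

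The reason this line of argument does not extend to (AGD) is worth flagging, since it explains why the theorem stops where it does: without polarization, the per-literal choice of $x_j$ with \emph{minimal} coefficient is dictated individually by each $\alpha$, and the difference $\alpha'\{m'\} - \alpha\{m\}$ becomes non-positive rather than non-negative. Thus only $\C' \eimplies \C$ (soundness) survives, while the converse may genuinely fail, exactly as illustrated by the right-hand example in Section~\ref{sec:tfg}.
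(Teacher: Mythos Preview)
Your proposal is correct and follows essentially the same approach as the paper: define $m'$ as the restriction of $m$ to $\B'$, use $m\models E$ to get $m\eequiv m'$, and then verify each literal of $\C'$ via the substitution equality in the (RED) case and the inequality $\sum_i \alpha(x_i)\,m(x_i) \leq \alpha(x_j)\,m(a)$ in the (AGP) case. Your explicit computation of the difference $\alpha'\{m'\}-\alpha\{m\}=\sum_i(\alpha(x_j)-\alpha(x_i))\,m(x_i)$ is a slightly more detailed rendering of exactly the same inequality the paper invokes, and your closing remark on why (AGD) fails is consistent with the paper's treatment.
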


\begin{proof}
  Take a marking $m$ of $\mathbb{N}^\B$ such that $m \models E \land \C$. We
  want to show that there exists a valuation $m'$ of $\mathbb{N}^{\B'}$ such
  that $m \eequiv m'$ ($\ast$) and $m' \models \C'$ ($\ast\ast$). This is enough
  to prove $m' \models E \land C'$. We have two different cases, corresponding
  to the rules (RED) and (AGP).

\begin{description}
  \item[(RED)] In this case we have $X \red p$ with $X = \{ x_1 , \dots,x_k \}$
  and $B' = B \setminus \{p\}$. We define $m'$ as the (unique) projection of $m$
  on $\B'$. Since $m \models E$ we have that $m'\eequiv m$ ($\ast$).
  
  Also, literals in $\C'$ are of the form $\alpha'\defeq \alpha\{ p \leftarrow
  x_1 + \dots + x_k \}$ where $\alpha$ is a literal of $\C$. Since $m(p) =
  \sum_{i \in 1..k} m(x_i)$ and $m$ is a model of $\alpha$, it is also the case
  that $m'$ is a model of $\alpha'$ $(\ast\ast)$.
  \medbreak

  \item[(AGP)] In this case we have $a \agg X$ with $X = \{x_1, \dots, x_k\}$
  and $\B' = B \setminus X$. We define $m'$ as the (unique) projection of $m$ on
  $\B'$, by taking $m'(a) = \sum_{i \in 1..k} m(x_i)$. Since $m \models E$ we
  have that $m'\eequiv m$ ($\ast$).

  We consider $x_j$ in $X$ the variable in $\HLF{\C}$ that was chosen in the
  reduction; meaning that $\C'$ is a conjunction of literals of the form
  $\alpha\{x_l \leftarrow 0\}_{l \neq j}\{ x_j \leftarrow a \}$, with $\alpha$ a
  literal of $\C$. Since $\sum_{i \in 1..k} \alpha(x_i) \, m(x_i) \leqslant
  \alpha(x_j) \, \sum_{i \in 1..k} m(x_i) = \alpha(x_j) \, m'(a)$, we have $m'$
  is a model of $\alpha'$ ($\ast\ast$).\qed
  \end{description}
\end{proof}

The final step of our proof relies on a \emph{progress property}, meaning there
is always a reduction step to apply except when all the literals have their
support on the reduced net, $N_2$. This property relies on relation $\mapsto^*$,
which is the transitive closure of $\mapsto$. Together with Th.~\ref{th:proj},
the progress theorem ensures the existence of a sequence $(P, \C) \mapsto^*
(P_2, \C')$, such that $C \eequiv C'$ (or $C' \eimplies C$ if we have at least
one non-polarized agglomeration). In this context, $P$ is the set of all
variables occurring in the TFG of $E$, and therefore it contains $P_1 \cup P_2$.

\begin{theorem}[Progress]
  \label{th:progress}
  Assume $(P, F_1) \mapsto^* (\B, \C)$ then either $\B \subseteq P_2$, the set
  of places of $N_2$, or there is an elimination step $(\B, \C) \mapsto (\B',
  \C')$ such that $\fv{\C'} \subseteq \B'$ and the places removed from $B$ have
  no successors in $\B'$: for all places $p$ in $\B \setminus \B'$, we have
  $\succs{p} \cap \B = \emptyset$. 
\end{theorem}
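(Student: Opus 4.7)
Suppose $\B \not\subseteq P_2$. The strategy is to identify inside $\B \setminus P_2$ a whole ``equation group'' whose collective TFG-successors are disjoint from $\B$, so that the appropriate rule (RED), (AGP), or (AGD) becomes applicable. Every $v \in \B \setminus P_2$ is a non-root of $\tfg{E}$ by (T4) and (T1), and by (T2) it belongs to a uniquely determined group: the singleton $\{v\}$ when $v$ is the target of a RED arc $X \red v$, or the whole AGP-target set $X$ when $a \agg X \ni v$. Two short preliminary observations are worth isolating: AGP groups are present in $\B$ atomically (if some $x \in X$ has been removed, by (T2) it could only have been removed by the very (AGP/AGD) step on $X$, which would have taken the others as well); and $a \in \B$ whenever $X \subseteq \B$ (because removing $a$ first would require $\succs{a} \cap \B = \emptyset$ and therefore $X \cap \B = \emptyset$).

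I would then introduce an auxiliary directed graph $\mathcal{H}$ whose vertices are the equation groups sitting in $\B \setminus P_2$, with an arc $G \to G'$ for $G \neq G'$ whenever some $v \in G$ and $v' \in G'$ satisfy $v \to v'$ in $\tfg{E}$. The load-bearing step is to prove that $\mathcal{H}$ is acyclic. From a putative cycle $G_0 \to G_1 \to \dots \to G_k \to G_0$ with witnessing arcs $u_i \to w_i$ (where $u_i \in G_i$ and $w_i \in G_{i+1}$), I would set $r_i \defeq u_i$ and establish $r_i \to r_{i+1}$ in $\tfg{E}$ in both situations. When $G_{i+1}$ is an AGP group with source $a_{i+1}$, constraint (T2) forces $u_i = a_{i+1}$, so $r_i \in G_i$ is precisely the agglomeration source, and the AGP arc $a_{i+1} \to r_{i+1}$ (available because $r_{i+1} \in G_{i+1}$) supplies the required $r_i \to r_{i+1}$. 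When $G_{i+1}$ is a RED singleton $\{v_{i+1}\}$, both $w_i$ and $r_{i+1}$ collapse to $v_{i+1}$, so the witnessing arc $u_i \to w_i$ is itself $r_i \to r_{i+1}$. Chaining yields a TFG cycle $r_0 \to r_1 \to \dots \to r_k \to r_0$, contradicting (T4).

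Since $\mathcal{H}$ is a finite DAG it has a sink $G$. For any $v \in G$, an immediate TFG-successor $v'$ of $v$ cannot lie in $P_2$ (roots have no incoming arc) and cannot lie in another group of $\B \setminus P_2$ (by the sink property together with the absence of self-loops in $\mathcal{H}$, itself an immediate consequence of (T4)); hence $v' \notin \B$, and the precondition enforced at $v'$'s prior removal step yields $\succs{v'} \cap \B = \emptyset$. Summing, $\succs{G} \cap \B = \emptyset$. I would then fire (RED) when $G = \{p\}$ and either (AGP) or (AGD), depending on polarization, when $G = X$ with $a \agg X$. The side condition of each rule matches the conclusion just established, and the auxiliary requirement $\fv{\C'} \subseteq \B'$ follows because the only variables introduced by the substitution lie in $X$ (for RED) or consist of the agglomeration source $a$ (for AGP/AGD); by the two preliminary observations both sets are contained in $\B$, and by acyclicity of $\tfg{E}$ they are disjoint from the nodes being removed.

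The main obstacle is the acyclicity of $\mathcal{H}$: an arc in $\mathcal{H}$ may enter a group at one node and exit from another, so a cycle in $\mathcal{H}$ does not obviously lift to a cycle in $\tfg{E}$. The lift succeeds only because (T2) uniquely determines the entry point of an AGP group (it must come from its agglomeration source), which also acts as a transit node between the incoming and outgoing TFG arcs of that group; this is the key technical insight that turns the group-level argument into a contradiction with (T4).
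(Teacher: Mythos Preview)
Your argument is correct and in fact more careful than the paper's. The paper works directly on the TFG: it lets $L = \{v \in \B \setminus P_2 \mid \succs{v} \cap \B = \emptyset\}$, picks any $p \in L$, applies (RED) when $p$ is a redundancy target, and otherwise asserts that the whole agglomeration block $X$ with $a \agg X \ni p$ satisfies $X \subseteq L$ ``by condition (T2)''. That last step is elliptic: (T2) only constrains the \emph{incoming} arcs of the $x_i$, so a sibling $x \in X$ may well have a TFG-successor still in $\B$, in which case this particular choice of $p$ does not yield a legal (AGP)/(AGD) step. Your group-graph $\mathcal{H}$ is precisely the device that repairs this: a sink of $\mathcal{H}$ is an entire equation block $G$ with $\succs{G} \cap \B = \emptyset$, which is what the rule precondition actually demands. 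So the paper's route is shorter but leaves the uniformity over the agglomeration block implicit, while yours makes it explicit at the cost of the auxiliary acyclicity lemma for $\mathcal{H}$ (whose proof via the lift $r_i = u_i$ is correct and is indeed the crux). Two small points on your write-up: the inclusion $X \subseteq \B$ needed in the (RED) case for $\fv{\C'} \subseteq \B'$ is not literally covered by your two stated preliminary observations, but by the same removal-precondition principle you use for $v'$ and for $a$ (had some $x \in X$ been removed earlier, $p \in \succs{x}$ would already have left $\B$); and ruling out intra-group TFG arcs $v \to v'$ with $v, v' \in X$ (your ``absence of self-loops in $\mathcal{H}$'') uses (T2) together with (T4), not (T4) alone, since one needs that the only incoming arc to $v' \in X$ comes from $a \notin X$.
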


\begin{proof}
  Assume we have $(P, F_1) \mapsto^* (\B, \C)$ and $\B \nsubseteq P_2$.
  
  By condition (T4) in Def.~\ref{def:well-formed-tfg}, we know that $P_2$ are
  roots in the TFG $\tfg{E}$. We consider the set of nodes in $B \setminus P_2$,
  which corresponds to nodes in $B$ with at least one parent.
  Also, by condition (T4), we know that $\tfg{E}$ is acyclic, then there are nodes
  in $\B \setminus P_2$ that have no successors in $\B$. We call this set $L$.
  Hence, $L \defeq \{v \mid v \in \B \setminus P_2 \land \succs{v} \cap B =
  \emptyset\}$. 
  
  Take a node $p$ in $L$. We have two possible cases. If there is a set $X$ such
  that $X \red p$, we can apply the (RED) elimination rule.
  Otherwise, there exists a node $a$ and a set $X \subseteq L$ (by condition (T2))
  such that $a \agg X$ with $p \in X$. In this case, apply rule (AGP) or (AGD),
  depending on whether the agglomeration is polarized or not.\qed
  \end{proof}

\medskip\noindent\textbf{Remark.} We have designed the rule (AGD) to obtain at
least $F_2 \eimplies F_1$ when the procedure is not complete (instead of $F_2
\eequiv F_1$), which is useful for finding witnesses (see
Th.~\ref{th:exploration}). Alternatively, we could propose a variant rule, say
(AGD'), which chooses the variable $x_j$ having the highest coefficient in
$\alpha_i$, that is $\alpha_i(x_j) = \max_{X}\alpha_i$. This variant guarantees
a dual result, that is $F_1 \eimplies F_2$. In this case, if $F_2$ is not
reachable then $F_1$ is not reachable, which is useful to prove invariants.


\section{Experimental Results}
\label{sec:performance}

\paragraph*{Data-Availability Statement.} We have implemented our fast
quantifier elimination procedure in a new, open-source tool called
\tool{Octant}~\cite{octant}, available on GitHub. All the tools, scripts and
benchmarks used in our experiments are part of our
artifact~\cite{amat_2023_7935154}.

We use an extensive, and independently managed, set of models and formulas
collected from the 2022 edition of the Model Checking Contest
(MCC)~\cite{mcc:2022}.
The benchmark is built from a collection of $128$ models. Most models are
parametrized and can have several instances. This amounts to about $1\, 400$
different instances of Petri nets whose size varies widely, from $9$ to $50\,
000$ places, and from $7$ to $200\, 000$ transitions. This collection provides a
large number of examples with various structural and behavioral characteristics,
covering a large variety of use cases. Each year, the MCC organizers randomly
generate $16$ reachability formulas for each instance. The pair of a Petri
instance and a formula is a \emph{query}; which means we have more than $22\,
000$ queries overall.

We do not compute reductions ourselves but rely on the tool \reduce, part of the
latest public release of \tina~\cite{tinaToolbox}. We define the reduction ratio
($r_p$) of an instance as the ratio $(p_{init} - p_{red})/p_{init}$ between the
number of places before ($p_{init}$) and after ($p_{red}$) reduction. We only
consider instances with a ratio, $r_p$, between $1\%$ and $100\%$ (meaning the
net is \emph{fully reduced}), which still leaves about $17\,000$ queries, so
about $77\%$ of the whole benchmark. More information about the distribution of
reductions can be found in~\cite{fi2022,berthomieu_counting_2019}, where we show
that almost half the instances ($48\%$) can be reduced by a factor of $25\%$ or
more. 

The size of the reduction system, $E$, is proportional to the number of places
that can be removed. To give a rough idea, the mean number of variables in $E$
is $1\,375$, with a median value of $114$ and a maximum of about $62\,000$. The
number of literals is also rather substantial: a mean of $869$ literals ($62\%$
of agglomerations and $38\%$ of redundancies), with a median of $27$ and a
maximum of about $38\,000$.

We report on the results obtained on two main categories of experiments: first
with model-checking, to evaluate if our approach is effective in practice, using
real tools; then to assess the precision and performance of our fast quantifier
elimination procedure.

\begin{figure}[t] 
  \label{fig8} 
  \begin{minipage}[b]{0.5\linewidth}
    \centering
    \includegraphics[width=\linewidth]{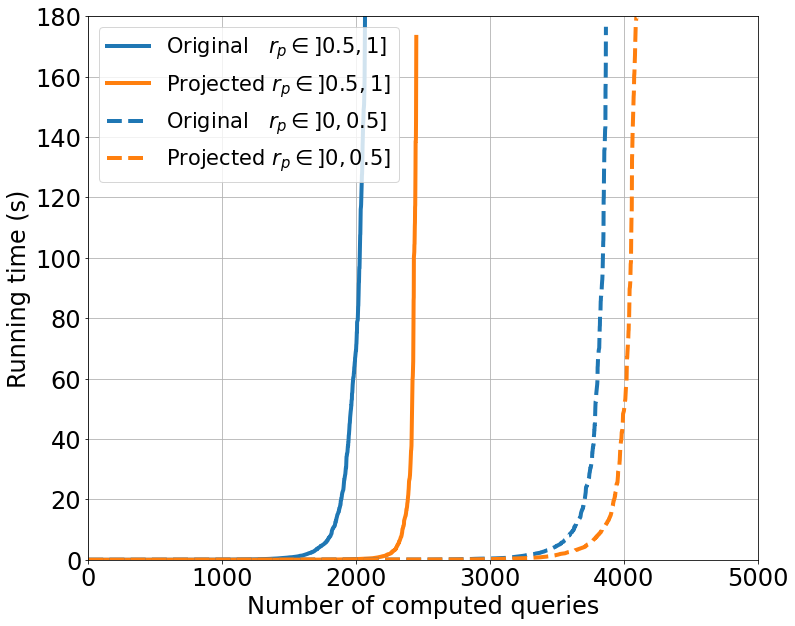} 
    \caption{Random walk w/wo reductions.} 
    \label{fig:walk}
    \vspace{4ex}
  \end{minipage}
  \begin{minipage}[b]{0.5\linewidth}
    \centering
    \includegraphics[width=\linewidth]{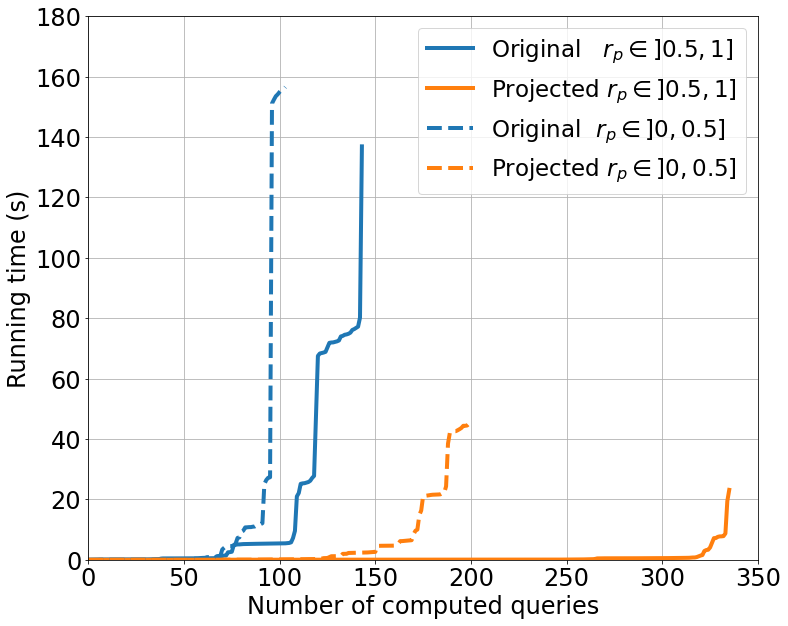} 
    \caption{$k$-induction w/wo reductions.} 
    \label{fig:kinduc}
    \vspace{4ex}
  \end{minipage} 
\end{figure}

\subsubsection*{Model-Checking.}
We start by showing the effectiveness of our approach on both $k$-induction and
random walk. This is achieved by comparing the computation time, with and
without reductions, on a model-checker that provides a ``reference''
implementation of these techniques. (Without any other optimizations that could
interfere with our experiments.) It is interesting to test the results of our
optimization separately on these two techniques. Indeed, each technique is
adapted to a different category of queries: properties that can be decided by
finding a witness, meaning true EF formulas or false AG ones, can often be
checked more efficiently using a random exploration of the state space. On the
other hand, symbolic verification methods are required when we want to check
invariants.

We display our results using the two ``cactus plots'' in Figs.~\ref{fig:walk}
and~\ref{fig:kinduc}. We distinguish between two categories of instances,
depending on their reduction ratio. We use dashed lines for models with a low or
moderate reduction ratio (value of $r_p$ less than $50\%$) and solid lines for
models that can be reduced by more than half. The first category amounts to
roughly $10\,700$ queries ($62\%$ of our benchmark), while the second category
contains about $6\,000$ queries. The most interesting conclusion we can draw
from these results is the fact that our approach is beneficial even when there
is only a limited amount of reductions.

Our experiments were performed with a maximal timeout of $\SI{180}{\second}$ and
integrated the projection time into the total execution time. The ``vertical
asymptote'' that we observe in these plots is a good indication that we cannot
expect further gains, without projection, for timeout values above 60s. Hence,
our choice of timeout value does not have an overriding effect. We observe
moderate performance gains with random exploration (with $\times 1.06$ more
computed queries on low-reduction instances and $\times 1.19$ otherwise) and
good results with $k$-induction (respectively $\times 1.94$ and $\times 2.33$).

We obtain better results if we focus on queries that take more than
$\SI{1}{\second}$ on the original formula, which indicates that reductions are
most effective on ``difficult problems'' (there is not much to gain on instances
that are already easy to solve). With random walk, for instance, the gain
becomes $\times 1.48$ for low-reduction instances and $\times 1.93$ otherwise. 
The same observation is true with $k$-induction, with performance gains of $\times 3.78$ and $\times 3.51$ respectively.


\subsubsection*{Model-Checking under Real Conditions.}
We also tested our approach by transparently adding polyhedral reductions as a
front-end to three different model-checkers: \tapaal~\cite{tapaal},
\its~\cite{its_tools}, and \lola~\cite{lola}, that implement portfolios of
verification techniques.
All three tools regularly compete in the MCC (on the same set of queries that we
use for our benchmark), \tapaal{} and \its{} share the top two places in the
reachability category of the 2022 and 2023 editions.

We ran each tool on our set of complete projections, which amounts to almost
$100\,000$ runs (one run for each tool, once on both the original and the
projected query). We obtained a $100\%$ reliability result, meaning that all
tools gave compatible results on all the queries; and therefore also compatible
results on the original and the projected formulas.

A large part of the queries can be computed by all the tools in less than
$\SI{100}{\milli\second}$ and can be considered as {easy}. These queries are
useful for testing reliability but can skew the interpretation of results when
comparing performances. This is why we decided to focus our results on a set of
$809$ \emph{challenging queries}, that we define as queries for which either
\tapaal{} or \its{}, or both, are not able to compute a result before
projection. The $809$ challenging queries ($4\%$ of queries) are well
distributed, since they cover $209$ different instances ($14\%$ of all
instances), themselves covering $43$ different models ($33\%$ of the models).

We display the results obtained on the challenging queries, for a timeout of
$\SI{180}{\second}$, in Table~\ref{fig:table}. We provide the number of computed
queries before and after projection, together with the mean and median speed-up
(the ratio between the computation time with and without projection). The
``Exclusive'' column reports, for each tool, the number of queries that can only
be computed using the projected formula. Note that we may sometimes time out
with the projected query, but obtain a result without. This can be explained by
cases where the size of the formula blows up during the transformation into DNF.

\newcolumntype{x}[1]{>{\centering\arraybackslash\hspace{0pt}}p{#1}}
\begin{figure}[t]
  \begin{center}
    \begin {tabular}{l cc x{0em} rc x{0em} c}%
      \toprule
      \textsc{Tool} & \multicolumn{2}{c}{\textsc{\# Queries}} 
      & & \multicolumn{2}{c}{\textsc{Speed-up}} 
      & & \multirow{2}{*}{\begin{minipage}[c]{4em}\centering 
        \textsc{\# Excl.\\ Queries} 
      \end{minipage}}\\\cmidrule(rl){2-3}\cmidrule(rl){5-6}
       & \textsc{Original} & \textsc{Projected} 
       && \textsc{Mean} & \textsc{Median} & & \\\midrule
      \its &  $302$ & $352$ && $ 1.42$ & $ 1.00$ && $78$\\
      \lola &  $143$ & $205$ && $ 14.97$ & $ 1.44$ && $76$\\
      \tapaal &  $134$ & $216$ && $ 1.87$ & $ 1.17$ && $99$\\
      \bottomrule
    \end{tabular}
  \end{center}
\caption{Impact of projections on the challenging queries.\label{fig:table}}
\end{figure}

We observe substantial performance gains with our approach and can solve about
half of the challenging queries. For instance, we are able to compute $\times
1.6$ more challenging queries with \tapaal\ using projections than without. (We
display more precise results on \tapaal, the winner of the MCC 2022 edition, in
Fig.~\ref{fig:tapaal}.) We were also able to compute $62$ queries, using
projections, that no tool was able to solve during the last MCC (where each tool
has $\SI{3\,600}{\second}$ to answer $16$ queries on a given model instance).
All these results show that polyhedral reductions are effective on a large set
of queries and that their benefits do not significantly overlap with other
existing optimizations, an observation that was already made, independently,
in~\cite{fi2022} and~\cite{bonneland_stubborn_2019}.

The approach implemented in \tool{Octant}{} was partially included in the
version of our model-checker, called \tool{SMPT}~\cite{amat2023smpt}, that
participated in the MCC 2023 edition. We mainly left aside the handling of
under-approximated queries, when the formula projection is not complete. While
\tool{SMPT}{} placed third in the reachability category, the proportion of
queries it was able to solve raised by $5.5\%$ between 2022 (without the use of
\tool{Octant}{}) and 2023, to reach a ratio of $93.6\%$ of all queries solved
with our tool. A $5\%$ gain is a substantial result, taking into account that
the best-performing tools are within $1\%$ of each other; the ratios for \its{}
and \tapaal{} in 2023 are respectively $94.6\%$ and $94.3\%$.

\subsubsection*{Performance of Fast Elimination.}
Our last set of experiments is concerned with the accuracy and performance of
our quantifier elimination procedure. We decided to compare our approach with
\tool{Redlog}~\cite{dolzmann1997redlog} and \tool{isl}~\cite{verdoolaege2010isl}
(we give more details on these two tools in
Sect.~\ref{sec:discussion-related-works}).

\begin{figure}[t] 
  \label{fig7} 
  \begin{minipage}[b]{0.5\linewidth}
    \centering
    \includegraphics[width=\linewidth]{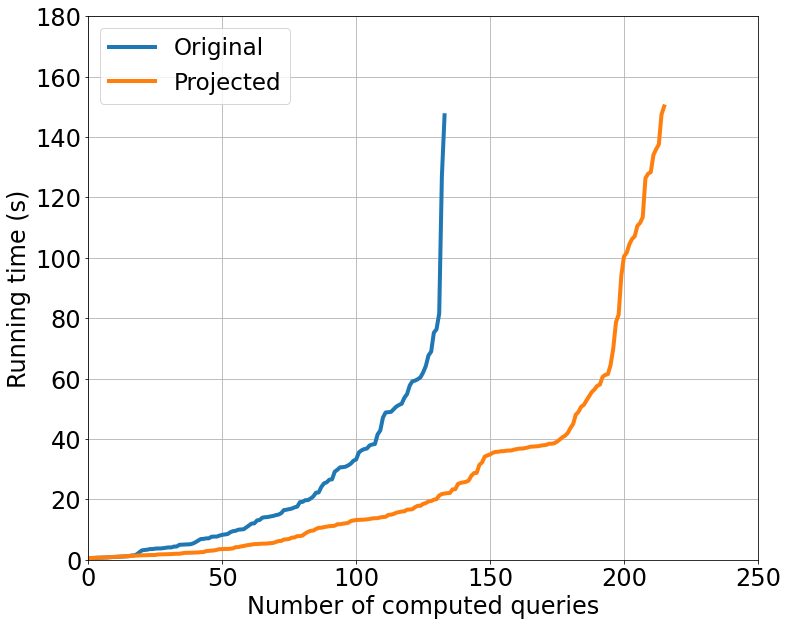} 
    \caption{\tapaal{} w/wo polyhedral reductions.}
    \label{fig:tapaal}
    \vspace{4ex}
  \end{minipage}
  \begin{minipage}[b]{0.5\linewidth}
    \centering
    \includegraphics[width=\linewidth]{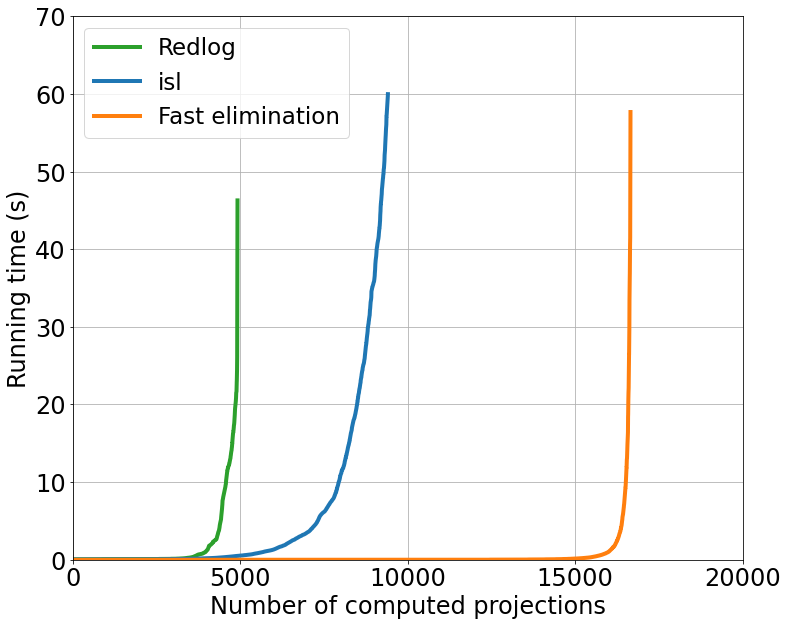} 
    \caption{\tool{Redlog} vs \tool{isl} vs Fast elimination.}
    \label{fig:barvinok}
    \vspace{4ex}
  \end{minipage} 
\end{figure}

We display our results in the cactus plot of Fig.~\ref{fig:barvinok}, where we
compare the number of projections we can compute given a fixed timeout.
We observe a significant performance gap. For instance, with a timeout of
$\SI{60}{\second}$, we are able to compute $16\,653$ projections, out of
$16\,976$ queries ($98\%$), compared to $9\,407$ ($55\%$) with \tool{isl} or
$4\,915$ ($29\%$) with \tool{Redlog}. So an increase of $\times 1.77$ over the
better of the two tools.
This provides more empirical evidence that the class of linear systems we manage
is not trivial, or at least does not correspond to an easy case for the
classical procedure implemented in \tool{Redlog} and \tool{isl}.
We also have good results concerning the precision of our approach, since we
observe that about $80\%$ of the projections are complete. Furthermore,
projections are inexpensive. For instance, the computation time is less than
$\SI{1}{\second}$ for $96\%$ of the formulas. We also obtained a median
reduction ratio (computed as for the number of places) of $0.2$ for the number
of cubes and their respective number of literals.


\section{Discussion and Related Works}
\label{sec:discussion-related-works}

We proposed a quantifier elimination procedure that can take benefit of
polyhedral reductions and can be used, transparently, as a pre-processing step
of existing model-checkers. The main characteristic of our approach is to rely
on a graph structure, called Token Flow Graph, that encodes the specific shape
of our reduction equations.

The idea of using linear equations to keep track of the effects of structural
reductions originates from~\cite{berthomieu2018petri,berthomieu_counting_2019},
as a method for counting the number of reachable markings. We extended this
approach to Bounded Model Checking (BMC) in~\cite{fi2022} where we defined our
\emph{polyhedral abstraction} equivalence, $\eequiv$, and in~\cite{reductron} where we
proposed a procedure to automatically prove when such abstractions are correct.
The idea to extend this relation to formulas is new (see
Def.~\ref{def:feequiv}). In this paper, we broaden our approach to a larger set
of verification methods; most particularly $k$-induction, which is useful to
prove invariants, and simulation (or \emph{random walk}), which is useful for
finding counter-examples. We introduced the notion of a Token Flow Graph (TFG)
in~\cite{spin2021}, as a new method to compute the \emph{concurrent places} of a
net; meaning all pairs of places that can be marked simultaneously. We find a
new use for TFGs here, as the backbone of our variable elimination algorithm,
and show that we can efficiently eliminate variables in systems of the form $E
\wedge F$, for an arbitrary $F$.

We formulated our method as a variable elimination procedure for a restricted
class of linear systems. There exist some well-known classes of linear systems
where variable elimination has a low complexity. A celebrated example is given
by the link between unimodular matrices and integral
polyhedra~\cite{hoffman2010integral}, which is related to many examples found in
abstract domains used in program verification, such as \emph{systems of
differences}~\cite{aspvall1980polynomial} or
octagon~\cite{mine2006octagon,jeannet2009apron}. To the best of our knowledge,
none of the known classes correspond to what we define using TFGs. There is also
a rich literature about quantifier elimination in Presburger arithmetic, such as
Cooper's algorithm~\cite{cooper1972theorem,haase2018survival} or the Omega
test~\cite{omega} for instance, and how to implement it
efficiently~\cite{huynh1992practical,lasaruk2007weak,monniaux2010quantifier}.
These algorithms have been implemented in several tools, using many different
approaches: automata-based, e.g.~\tool{TaPAS}~\cite{leroux2009tapas}; inside
computer algebra systems, like with \tool{Redlog}~\cite{dolzmann1997redlog}; or
in program analysis tools, like \tool{isl}~\cite{verdoolaege2010isl}, part of
the Barvinok toolbox. Another solution would have been to retrieve “projected
formulas” directly from SMT solvers for linear arithmetic, which often use
quantifier elimination internally. Unfortunately, this feature is not available,
even though some partial solutions have been proposed
recently~\cite{barth2022ultimate}. All the exact methods that we tested have
proved impractical in our case. This was to be expected. Quantifier elimination
can be very complex, with an exponential time complexity in the worst case (for
existential formulas as we target); it can generate very large formulas; and it
is highly sensitive to the number of variables, when our problem often involves
several hundreds and sometimes thousands of variables. Also, quantifier
elimination often requires the use of a divisibility operator (also called
\emph{stride format} in~\cite{omega}), which is not part of the logic fragment
that we target.

Another set of related works is concerned with \emph{polyhedral
techniques}~\cite{feautrier2011polyhedron}, used in program analysis. For
instance, our approach obviously shares similarities with works that try to
derive linear equalities between variables of a
program~\cite{cousot1978automatic}, and polyhedral abstractions are very close
in spirit to the notion of linear dependence between vectors of integers
(markings in our case) computed in compiler optimizations. Another indication of
this close relationship is the fact that \tool{isl}, the numerical library that
we use to compare our performances, was developed to support polyhedral
compilation. We need to investigate this relationship further and see if our
approach could find an application with program verification. From a more
theoretical viewpoint, we are also looking into ways to improve the precision of
our projection in the cases where we find non-polarized sets of constraints.

\clearpage
\newpage
\bibliographystyle{alphaurl}
\bibliography{bibfile}

\newcommand{\etalchar}[1]{$^{#1}$}
\begin{thebibliography}{ADZLB23b}

\bibitem[ABB{\etalchar{+}}16]{30YearsOfGreatSPN}
Elvio~Gilberto Amparore, Gianfranco Balbo, Marco Beccuti, Susanna Donatelli,
  and Giuliana Franceschinis.
\newblock 30 years of {GreatSPN}.
\newblock In {\em Principles of Performance and Reliability Modeling and
  Evaluation}. Springer, 2016.
\newblock \href {https://doi.org/10.1007/978-3-319-30599-8_9}
  {\path{doi:10.1007/978-3-319-30599-8_9}}.

\bibitem[ABC{\etalchar{+}}19]{mcc2019}
Elvio Amparore, Bernard Berthomieu, Gianfranco Ciardo, Silvano Dal~Zilio,
  Francesco Gall{\`a}, Lom~Messan Hillah, Francis Hulin-Hubard, Peter~Gj{\o}l
  Jensen, Lo{\"i}g Jezequel, Fabrice Kordon, Didier Le~Botlan, Torsten Liebke,
  Jeroen Meijer, Andrew Miner, Emmanuel Paviot-Adet, Ji{\v{r}}{\'i} Srba, Yann
  Thierry-Mieg, Tom van Dijk, and Karsten Wolf.
\newblock {Presentation of the 9th Edition of the Model Checking Contest}.
\newblock In {\em Tools and Algorithms for the Construction and Analysis of
  Systems ({TACAS})}, LNCS. Springer, 2019.
\newblock \href {https://doi.org/10.1007/978-3-662-58381-4\_9}
  {\path{doi:10.1007/978-3-662-58381-4\_9}}.

\bibitem[ABD21]{pn2021}
Nicolas Amat, Bernard Berthomieu, and Silvano {Dal Zilio}.
\newblock {On the Combination of Polyhedral Abstraction and SMT-based Model
  Checking for Petri nets}.
\newblock In {\em Application and Theory of Petri Nets and Concurrency (Petri
  Nets)}, {LNCS}. Springer, 2021.
\newblock \href {https://doi.org/10.1007/978-3-030-76983-3\_9}
  {\path{doi:10.1007/978-3-030-76983-3\_9}}.

\bibitem[ABD22]{fi2022}
Nicolas Amat, Bernard Berthomieu, and Silvano {Dal Zilio}.
\newblock {A Polyhedral Abstraction for Petri Nets and its Application to
  SMT-Based Model Checking}.
\newblock {\em Fundamenta Informaticae}, 187(2-4), 2022.
\newblock \href {https://doi.org/10.3233/FI-222134}
  {\path{doi:10.3233/FI-222134}}.

\bibitem[ADLB22]{sttt22}
Nicolas Amat, Silvano {Dal Zilio}, and Didier Le~Botlan.
\newblock Leveraging polyhedral reductions for solving {Petri} net reachability
  problems.
\newblock {\em International Journal on Software Tools for Technology
  Transfer}, 25, 2022.
\newblock \href {https://doi.org/10.1007/s10009-022-00694-8}
  {\path{doi:10.1007/s10009-022-00694-8}}.

\bibitem[ADZLB21]{spin2021}
Nicolas Amat, Silvano Dal~Zilio, and Didier Le~Botlan.
\newblock {Accelerating the Computation of Dead and Concurrent Places using
  Reductions}.
\newblock In {\em {Model Checking Software (SPIN)}}, LNCS. {Springer}, 2021.
\newblock \href {https://doi.org/10.1007/978-3-030-84629-9\_3}
  {\path{doi:10.1007/978-3-030-84629-9\_3}}.

\bibitem[ADZLB23a]{amat_2023_7935154}
Nicolas Amat, Silvano Dal~Zilio, and Didier Le~Botlan.
\newblock {Artifact for VMCAI 2024 Paper "Project and Conquer: Fast Quantifier
  Elimination for Checking Petri Net Reachability"}, 2023.
\newblock \href {https://doi.org/10.5281/zenodo.7935153}
  {\path{doi:10.5281/zenodo.7935153}}.

\bibitem[ADZLB23b]{reductron}
Nicolas Amat, Silvano Dal~Zilio, and Didier Le~Botlan.
\newblock {Automated Polyhedral Abstraction Proving}.
\newblock In {\em Application and Theory of Petri Nets and Concurrency (Petri
  Nets)}, LNCS. Springer, 2023.
\newblock \href {https://doi.org/10.1007/978-3-031-33620-1_18}
  {\path{doi:10.1007/978-3-031-33620-1_18}}.

\bibitem[Ama23]{octant}
Nicolas Amat.
\newblock Octant (version 1.0): projection of {Petri net} reachability
  properties, 2023.
\newblock URL: \url{https://github.com/nicolasAmat/Octant}.

\bibitem[AS80]{aspvall1980polynomial}
Bengt Aspvall and Yossi Shiloach.
\newblock A polynomial time algorithm for solving systems of linear
  inequalities with two variables per inequality.
\newblock {\em SIAM Journal on computing}, 9(4), 1980.
\newblock \href {https://doi.org/10.1137/0209063} {\path{doi:10.1137/0209063}}.

\bibitem[AZ23]{amat2023smpt}
Nicolas Amat and Silvano~Dal Zilio.
\newblock {SMPT: A Testbed for Reachability Methods in Generalized Petri Nets}.
\newblock In {\em Formal Methods (FM)}, {LNCS}. Springer, 2023.
\newblock \href {https://doi.org/10.1007/978-3-031-27481-7_25}
  {\path{doi:10.1007/978-3-031-27481-7_25}}.

\bibitem[BDHP]{barth2022ultimate}
Max Barth, Daniel Dietsch, Matthias Heizmann, and Andreas Podelski.
\newblock {Ultimate Eliminator at SMT-COMP} 2022.

\bibitem[BDJ{\etalchar{+}}19]{bonneland_stubborn_2019}
Frederik~M B{\o}nneland, Jakob Dyhr, Peter~G Jensen, Mads Johannsen, and
  Ji{\v{r}}{\'\i} Srba.
\newblock Stubborn versus structural reductions for {Petri} nets.
\newblock {\em Journal of Logical and Algebraic Methods in Programming}, 102,
  2019.
\newblock \href {https://doi.org/10.1016/j.jlamp.2018.09.002}
  {\path{doi:10.1016/j.jlamp.2018.09.002}}.

\bibitem[Ber87]{berthelot_transformations_1987}
G.~Berthelot.
\newblock Transformations and {Decompositions} of {Nets}.
\newblock In {\em Petri Nets: Central Models and Their Properties (ACPN)},
  {LNCS}. Springer, 1987.
\newblock \href {https://doi.org/10.1007/978-3-540-47919-2\_13}
  {\path{doi:10.1007/978-3-540-47919-2\_13}}.

\bibitem[BLBDZ18]{berthomieu2018petri}
Bernard Berthomieu, Didier Le~Botlan, and Silvano Dal~Zilio.
\newblock {Petri} net reductions for counting markings.
\newblock In {\em Model Checking Software (SPIN)}, {LNCS}. Springer, 2018.
\newblock \href {https://doi.org/10.1007/978-3-319-94111-0\_4}
  {\path{doi:10.1007/978-3-319-94111-0\_4}}.

\bibitem[BLBDZ19]{berthomieu_counting_2019}
Bernard Berthomieu, Didier Le~Botlan, and Silvano Dal~Zilio.
\newblock Counting {Petri} net markings from reduction equations.
\newblock {\em International Journal on Software Tools for Technology
  Transfer}, 22, 2019.
\newblock \href {https://doi.org/10.1007/s10009-019-00519-1}
  {\path{doi:10.1007/s10009-019-00519-1}}.

\bibitem[BRV04]{tina2004}
B.~Berthomieu, P.-O. Ribet, and F.~Vernadat.
\newblock The tool {TINA} -- {Construction} of abstract state spaces for
  {Petri} nets and time {Petri} nets.
\newblock {\em International Journal of Production Research}, 42(14), 2004.
\newblock \href {https://doi.org/10.1080/00207540412331312688}
  {\path{doi:10.1080/00207540412331312688}}.

\bibitem[CH78]{cousot1978automatic}
Patrick Cousot and Nicolas Halbwachs.
\newblock Automatic discovery of linear restraints among variables of a
  program.
\newblock In {\em Proceedings of the 5th ACM SIGACT-SIGPLAN symposium on
  Principles of programming languages}, 1978.
\newblock \href {https://doi.org/10.1145/512760.512770}
  {\path{doi:10.1145/512760.512770}}.

\bibitem[CJGK{\etalchar{+}}18]{clarke2018model}
Edmund~M Clarke~Jr, Orna Grumberg, Daniel Kroening, Doron Peled, and Helmut
  Veith.
\newblock {\em Model checking}.
\newblock MIT press, 2018.

\bibitem[Coo72]{cooper1972theorem}
David~C Cooper.
\newblock Theorem proving in arithmetic without multiplication.
\newblock {\em Machine intelligence}, 7(91-99), 1972.

\bibitem[DJJ{\etalchar{+}}12]{tapaal}
Alexandre David, Lasse Jacobsen, Morten Jacobsen, Kenneth~Yrke J{\o}rgensen,
  Mikael~H. M{\o}ller, and Ji{\v{r}}{\'\i} Srba.
\newblock {TAPAAL 2.0: Integrated Development Environment for Timed-Arc Petri
  Nets}.
\newblock In {\em Tools and Algorithms for the Construction and Analysis of
  Systems ({TACAS})}, LNCS. Springer, 2012.
\newblock \href {https://doi.org/10.1007/978-3-642-28756-5_36}
  {\path{doi:10.1007/978-3-642-28756-5_36}}.

\bibitem[DS97]{dolzmann1997redlog}
Andreas Dolzmann and Thomas Sturm.
\newblock {Redlog: Computer algebra meets computer logic}.
\newblock {\em {ACM} Sigsam Bulletin}, 31(2), 1997.
\newblock \href {https://doi.org/10.1145/261320.261324}
  {\path{doi:10.1145/261320.261324}}.

\bibitem[FL11]{feautrier2011polyhedron}
Paul Feautrier and Christian Lengauer.
\newblock {Polyhedron Model}.
\newblock {\em Encyclopedia of parallel computing}, 1, 2011.
\newblock \href {https://doi.org/10.1007/978-0-387-09766-4_502}
  {\path{doi:10.1007/978-0-387-09766-4_502}}.

\bibitem[GRVB08]{ganty2008many}
Pierre Ganty, Jean-Fran{\c{c}}ois Raskin, and Laurent Van~Begin.
\newblock {From many places to few: Automatic abstraction refinement for Petri
  nets}.
\newblock {\em Fundamenta Informaticae}, 88(3), 2008.

\bibitem[Haa18]{haase2018survival}
Christoph Haase.
\newblock A survival guide to {Presburger} arithmetic.
\newblock {\em ACM SIGLOG News}, 5(3), 2018.
\newblock \href {https://doi.org/10.1145/3242953.3242964}
  {\path{doi:10.1145/3242953.3242964}}.

\bibitem[HK10]{hoffman2010integral}
Alan~J Hoffman and Joseph~B Kruskal.
\newblock Integral boundary points of convex polyhedra.
\newblock In {\em 50 Years of integer programming 1958-2008}. Springer, 2010.
\newblock \href {https://doi.org/10.1007/978-3-540-68279-0_3}
  {\path{doi:10.1007/978-3-540-68279-0_3}}.

\bibitem[HLL92]{huynh1992practical}
Tien Huynh, Catherine Lassez, and Jean-Louis Lassez.
\newblock Practical issues on the projection of polyhedral sets.
\newblock {\em Annals of mathematics and artificial intelligence}, 6(4), 1992.
\newblock \href {https://doi.org/10.1007/BF01535523}
  {\path{doi:10.1007/BF01535523}}.

\bibitem[Imb93]{imbert1993fourier}
Jean-Louis Imbert.
\newblock Fourier's elimination: {Which} to choose?
\newblock In {\em PPCP}, volume~1, 1993.

\bibitem[JM09]{jeannet2009apron}
Bertrand Jeannet and Antoine Min{\'e}.
\newblock Apron: A library of numerical abstract domains for static analysis.
\newblock In {\em Computer Aided Verification (CAV)}, LNCS. Springer, 2009.
\newblock \href {https://doi.org/10.1007/978-3-642-02658-4_52}
  {\path{doi:10.1007/978-3-642-02658-4_52}}.

\bibitem[KBG{\etalchar{+}}22]{mcc:2022}
F.~Kordon, P.~Bouvier, H.~Garavel, F.~Hulin-Hubard, N.~Amat., E.~Amparore,
  B.~Berthomieu, D.~Donatelli, S.~{Dal Zilio}, {P. G.} Jensen, L.~Jezequel,
  C.~He, S.~Li, E.~Paviot-Adet, J.~Srba, and Y.~Thierry-Mieg.
\newblock {Complete Results for the 2022 Edition of the Model Checking
  Contest}, 2022.
\newblock URL: \url{{http://mcc.lip6.fr/2022/results.php}}.

\bibitem[KBJ21]{kang2021abstraction}
Jiawen Kang, Yunjun Bai, and Li~Jiao.
\newblock Abstraction-based incremental inductive coverability for {Petri}
  nets.
\newblock In {\em Application and Theory of Petri Nets and Concurrency (Petri
  Nets)}, {LNCS}. Springer, 2021.
\newblock \href {https://doi.org/10.1007/978-3-030-76983-3\_19}
  {\path{doi:10.1007/978-3-030-76983-3\_19}}.

\bibitem[KKG18]{khan2018survey}
Yasir~Imtiaz Khan, Alexandros Konios, and Nicolas Guelfi.
\newblock A survey of {Petri} nets slicing.
\newblock {\em ACM Computing Surveys (CSUR)}, 51(5), 2018.
\newblock \href {https://doi.org/10.1145/3241736} {\path{doi:10.1145/3241736}}.

\bibitem[Kor15]{kordon_small_2015}
Fabrice Kordon.
\newblock {Model SmallOperatingSystem, Model Checking Contest benchmark}, 2015.
\newblock URL:
  \url{https://mcc.lip6.fr/2023/pdf/SmallOperatingSystem-form.pdf}.

\bibitem[LC23]{tinaToolbox}
LAAS-CNRS.
\newblock Tina {Toolbox}.
\newblock \url{http://projects.laas.fr/tina}, 2023.

\bibitem[LOST17]{llorens2017integrated}
Marisa Llorens, Javier Oliver, Josep Silva, and Salvador Tamarit.
\newblock {An Integrated Environment for Petri Net Slicing}.
\newblock In {\em Application and Theory of Petri Nets and Concurrency (Petri
  Nets)}, {LNCS}. Springer, 2017.
\newblock \href {https://doi.org/10.1007/978-3-319-57861-3_8}
  {\path{doi:10.1007/978-3-319-57861-3_8}}.

\bibitem[LP09]{leroux2009tapas}
J{\'e}r{\^o}me Leroux and G{\'e}rald Point.
\newblock {TaPAS}: the {Talence Presburger} arithmetic suite.
\newblock In {\em Tools and Algorithms for the Construction and Analysis of
  Systems ({TACAS})}, LNCS. Springer, 2009.
\newblock \href {https://doi.org/10.1007/978-3-642-00768-2\_18}
  {\path{doi:10.1007/978-3-642-00768-2\_18}}.

\bibitem[LS07]{lasaruk2007weak}
Aless Lasaruk and Thomas Sturm.
\newblock Weak quantifier elimination for the full linear theory of the
  integers: A uniform generalization of {Presburger} arithmetic.
\newblock {\em Applicable Algebra in Engineering, Communication and Computing},
  18, 2007.
\newblock \href {https://doi.org/10.1007/s00200-007-0053-x}
  {\path{doi:10.1007/s00200-007-0053-x}}.

\bibitem[Min06]{mine2006octagon}
Antoine Min{\'e}.
\newblock The octagon abstract domain.
\newblock {\em Higher-order and symbolic computation}, 19(1), 2006.
\newblock \href {https://doi.org/10.1007/s10990-006-8609-1}
  {\path{doi:10.1007/s10990-006-8609-1}}.

\bibitem[Mon10]{monniaux2010quantifier}
David Monniaux.
\newblock {Quantifier elimination by lazy model enumeration}.
\newblock In {\em Computer Aided Verification (CAV)}, LNCS. Springer, 2010.
\newblock \href {https://doi.org/10.1007/978-3-642-14295-6_51}
  {\path{doi:10.1007/978-3-642-14295-6_51}}.

\bibitem[Pug91]{omega}
William Pugh.
\newblock {The Omega Test: A Fast and Practical Integer Programming Algorithm
  for Dependence Analysis}.
\newblock In {\em Proceedings of the ACM/IEEE Conference on Supercomputing}.
  ACM, 1991.
\newblock \href {https://doi.org/10.1145/125826.125848}
  {\path{doi:10.1145/125826.125848}}.

\bibitem[Rak12]{rakow2012safety}
Astrid Rakow.
\newblock Safety slicing {Petri} nets.
\newblock In {\em Application and Theory of Petri Nets and Concurrency (Petri
  Nets)}, {LNCS}. Springer, 2012.
\newblock \href {https://doi.org/10.1007/978-3-642-31131-4_15}
  {\path{doi:10.1007/978-3-642-31131-4_15}}.

\bibitem[SSS00]{sheeran_checking_2000}
Mary Sheeran, Satnam Singh, and Gunnar St{\aa}lmarck.
\newblock Checking {Safety} {Properties} {Using} {Induction} and a
  {SAT}-{Solver}.
\newblock In {\em Formal {Methods} in {Computer}-{Aided} {Design} ({FMCAD})},
  {LNCS}, Berlin, Heidelberg, 2000. Springer.
\newblock \href {https://doi.org/10.1007/3-540-40922-X_8}
  {\path{doi:10.1007/3-540-40922-X_8}}.

\bibitem[TM15]{its_tools}
Yann Thierry-Mieg.
\newblock Symbolic {Model}-{Checking} {Using} {ITS}-{Tools}.
\newblock In {\em Tools and {Algorithms} for the {Construction} and {Analysis}
  of {Systems} ({TACAS})}, LNCS. Springer, 2015.
\newblock \href {https://doi.org/10.1007/978-3-662-46681-0_20}
  {\path{doi:10.1007/978-3-662-46681-0_20}}.

\bibitem[TM21]{thierry2021symbolic}
Yann Thierry-Mieg.
\newblock Symbolic and structural model-checking.
\newblock {\em Fundamenta Informaticae}, 183(3-4), 2021.
\newblock \href {https://doi.org/10.3233/FI-2021-2090}
  {\path{doi:10.3233/FI-2021-2090}}.

\bibitem[Ver10]{verdoolaege2010isl}
Sven Verdoolaege.
\newblock isl: {An} integer set library for the polyhedral model.
\newblock In {\em Mathematical Software (ICMS)}, LNCS. Springer, 2010.
\newblock \href {https://doi.org/10.1007/978-3-642-15582-6_49}
  {\path{doi:10.1007/978-3-642-15582-6_49}}.

\bibitem[Wol18]{}
Karsten Wolf.
\newblock {Petri Net Model Checking with LoLA 2}.
\newblock In {\em Application and Theory of Petri Nets and Concurrency (Petri
  Nets)}, LNCS. Springer, 2018.
\newblock \href {https://doi.org/10.1007/978-3-319-91268-4_18}
  {\path{doi:10.1007/978-3-319-91268-4_18}}.

\end{thebibliography}

\end{document}